\begin{document}

\title{Nontrivial multi-product commutation relation toward reducing $T$-count\\ in sequential Pauli-based computation}

\author{Yusei Mori}
\email{u839526i@ecs.osaka-u.ac.jp}
\affiliation{Graduate School of Engineering Science, The University of Osaka, 1-3 Machikaneyama, Toyonaka, Osaka 560-8531, Japan}

\author{Hideaki Hakoshima}
\email{hakoshima.hideaki.es@osaka-u.ac.jp}
\affiliation{Graduate School of Engineering Science, The University of Osaka, 1-3 Machikaneyama, Toyonaka, Osaka 560-8531, Japan}
\affiliation{Center for Quantum Information and Quantum Biology, The University of Osaka, 1-2 Machikaneyama, Toyonaka, Osaka 560-0043, Japan}

\author{Keisuke Fujii}
\email{fujii.keisuke.es@osaka-u.ac.jp}
\affiliation{Graduate School of Engineering Science, The University of Osaka, 1-3 Machikaneyama, Toyonaka, Osaka 560-8531, Japan}
\affiliation{Center for Quantum Information and Quantum Biology, The University of Osaka, 1-2 Machikaneyama, Toyonaka, Osaka 560-0043, Japan}
\affiliation{RIKEN Center for Quantum Computing, 2-1 Hirosawa, Wako, Saitama 351-0198, Japan}

\begin{abstract}
  Quantum compilers that reduce the number of $T$~gates are essential for minimizing the overhead of fault-tolerant quantum computation.
  Achieving further $T$-count reduction calls for identifying equivalent circuit transformation rules beyond those utilized in existing tools.
  In this paper, we rewrite any given Clifford$+T$ circuit using a Clifford block followed by a sequential Pauli-based computation, and introduce a nontrivial, ancilla-free transformation rule, the multi-product commutation relation (MCR).
  MCR constructs gate sequences based on specific commutation properties among multi-Pauli operators, yielding seemingly non-commutative instances that can be commuted, thereby enabling gate orderings that cannot be derived from pairwise commutation alone.
  We also propose the MCR Compiler, which incorporates MCR-based transformations as an optimization pass. To evaluate its effect, we use a benchmark circuit dataset generated through quantum circuit unoptimization. This approach intentionally adds redundancy to the circuit while keeping its equivalence, allowing a quantitative evaluation of compiler performance by comparison with the original circuit.
  Our numerical experiments reveal that the MCR Compiler achieves further $T$-count reduction beyond current compilers, establishing MCR-based transformations as a practical optimization primitive. These results highlight an untapped opportunity to enhance the optimization capabilities of quantum compilers.
\end{abstract}

\maketitle

\section{Introduction}

\begin{figure*}[t]
  \centering
  \includegraphics[width=0.95\linewidth]{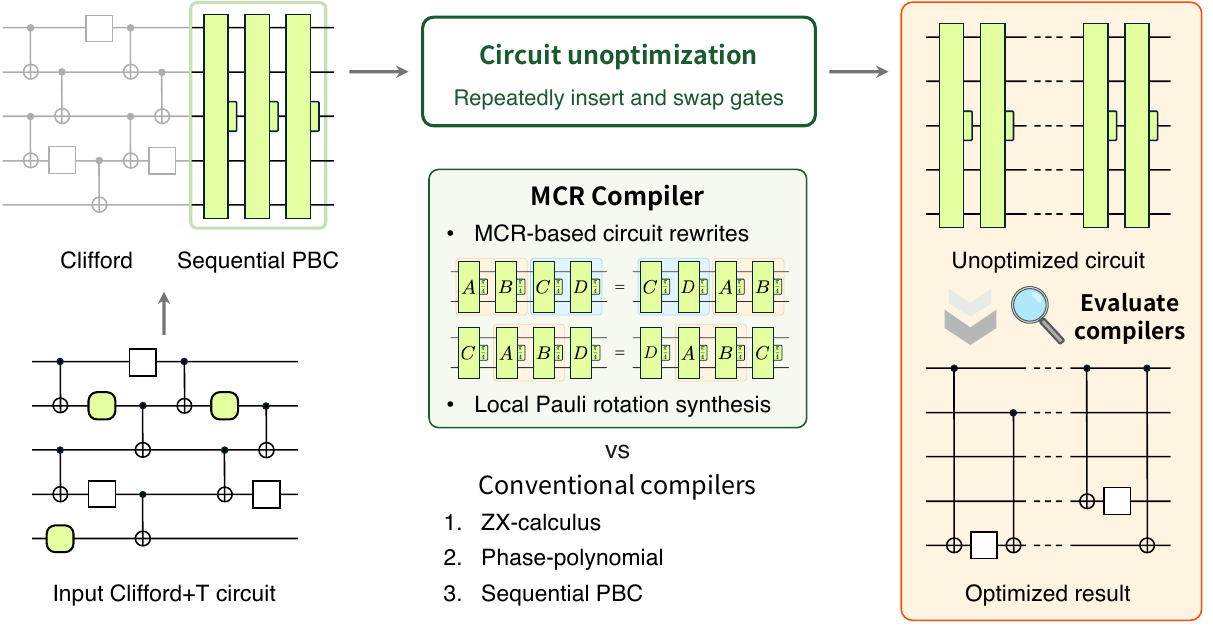}
  \caption{An overview of the quantum compiler benchmarking procedure for Clifford$+T$ circuits using MCR-based unoptimization. Starting from an input Clifford$+T$ circuit, we first reorder all $T$~gates with adjacent Clifford gates to separate the Clifford part and yield a sequential Pauli-based computation. Then, we apply the multiproduct commutation relation to the non-Clifford block, intentionally introducing redundant gates and yielding an \textit{unoptimized} circuit that is functionally equivalent but has a larger $T$-count.
    A target quantum compiler then attempts to compress this circuit and produces an optimized output, whose $T$-count is compared with that of the original circuit. If the compiler fails to recover to the original optimal $T$-count, analyzing the gap makes it possible to improve the compiler's performance.
    Using this framework, we compare the $T$-count optimization performance of the proposed multiproduct commutation Compiler with that of conventional compilers, including those based on the ZX-calculus, phase-polynomial, and sequential Pauli-based computation representations.
  }
  \label{fig: overview}
\end{figure*}

Quantum computing has the potential to outperform classical computation in specific tasks such as factorization~\cite{shor}, unstructured database search~\cite{grover}, and simulation of quantum many-body systems~\cite{SLloyd, quantum_simulation}.
While these algorithms promise significant theoretical speedups, their practical implementation requires expressing them as quantum circuits that meet the constraints of physical quantum hardware.
The ongoing increase in the number of physical qubits and the improvement of gate fidelity bring the construction of fault-tolerant quantum computers (FTQC) closer to practicality~\cite{ftqc, google_qec}.
In this emerging landscape, addressing resource constraints becomes increasingly essential, particularly the high implementation cost of specific gates in FTQC.

In FTQC architectures, the Clifford$+T$ gate set stands out as a leading candidate for universal quantum computation~\cite{nielsen_chuang} due to its compatibility with quantum error correction. Clifford gates are commonly implemented using transversal operations, which are relatively low-cost.
However, non-Clifford gates such as the $T$~gate are substantially more expensive to implement fault-tolerantly, primarily due to the need for magic state distillation~\cite{distillation}.
Therefore, reducing the number of $T$~gates in quantum circuits, a process known as \textit{$T$-count optimization}, has emerged as a key challenge for the efficient implementation of quantum algorithms on FTQC.

To enable efficient $T$-count reduction, quantum compilers form an essential component of the quantum computing stack, serving as an intermediate layer between high-level descriptions of quantum algorithms and low-level, hardware-specific implementations~\cite{quantum_compiler, compiler_toolchain, chong_qc_programming}. This design is derived from the role of classical compilers, which translate high-level programming languages into machine-level instructions~\cite{classical_compilers}.
Quantum compilers perform not only gate count reduction in quantum circuits but also a range of tasks, including gate decomposition~\cite{shende, solovay_kitaev, ross_selinger} and qubit mapping for hardware compatibility~\cite{qubit_routing, qiskit, pytket, staq}.

From the perspective of circuit optimization, a common strategy is to rewrite a quantum circuit into an intermediate representation (IR), apply optimization rules specialized for that representation, and translate it back into a quantum circuit.
Several IRs have been proposed for $T$-count reduction.
The first is the \textit{ZX-calculus}, a graphical framework which expresses quantum circuits as diagrams and simplifies them using rewrite rules such as spider fusion~\cite{zx_original_new, graph_theoretic_zx_culculus} and phase gadgetization~\cite{phase_gadget_zx_calculus}. The latter technique enables specific nonlocal optimizations when the involved gadgets can be propagated through the ZX-diagram. Nonetheless, the effectiveness of these techniques remains limited due to the presence of noncommuting gates within the circuit, such as internal Hadamard gates, which restrict simplification to local transformations.
Another well-known approach is the \textit{phase-polynomial}, an algebraic representation which encodes $CNOT+T$ circuits as linear Boolean functions and enables algorithmic optimization methods~\cite{amy_meet_in_the_middle, amy_matroid, amy_reed_muller}.
Extensions of this technique to Clifford$+T$ circuits have been proposed, a process known as \textit{Hadamard gadgetization}~\cite{campbell_compiler, fasttodd, alphatensor}.
Though this approach enables representing the entire circuit within the $CNOT+T$ framework, it inevitably introduces additional ancilla qubits, typically requiring one ancilla qubit per Hadamard gate. In the context of FTQC, these ancilla qubits must be implemented as fault-tolerant logical qubits, further increasing the resource requirements.

As an approach that can potentially overcome the limitations of the methods mentioned above, the \textit{sequential Pauli-based Computation (sequential PBC)} expresses a Clifford+$T$ circuit as a sequence of $\pm \pi/4$ multi-Pauli rotation gates obtained by rearranging all Clifford gates to the beginning of the circuit while keeping the overall equivalence~\cite{gosset}.
Notably, originally proposed as a framework for performing lattice surgery in surface codes, sequential PBC has been shown to be not only effective for reducing the $T$-count but also well-suited for optimizations targeting the instruction sets of FTQC~\cite{litinski, hirano_pbc}.
In the context of $T$-count reduction, sequential PBC has demonstrated performance comparable to other IRs, as exemplified by the compiler, TMerge~\cite{zhang_algorithm, fast_tmerge}. However, its optimization rules are currently restricted to mutually commuting Pauli rotations and thus cannot be directly applied to noncommuting cases.

In this paper, to extend the domain of existing circuit rewrite rules, we introduce a class of nonlocal and ancilla-free circuit transformations for quantum circuits, the \textit{multiproduct commutation relation (MCR)}, which is defined as a sequential PBC format.
MCR constructs a sequence consisting of multi-Pauli rotation gates that anticommute with each other but commute collectively.
As a result, MCR enables gate reorderings that go beyond local rewrites based solely on commutation relations between individual gates.
Leveraging this transformation rule, we propose the MCR Compiler that explicitly incorporates MCR as an optimization pass for $T$-count reduction. To evaluate its effectiveness in comparison with existing compilers, we construct benchmark circuits that are deliberately made redundant by MCR.
To this end, we employ MCR in an algorithm called \textit{quantum circuit unoptimization}~\cite{unoptimization}, which intentionally introduces redundancy to the circuit while keeping its equivalence.
Using this process, we can guarantee the optimal $T$-count from the original quantum circuit, allowing us to quantitatively analyze the performance of compilers in optimizing redundant $T$~gates.
Using the unoptimized circuits as benchmarks, where redundant $T$~gates are introduced by MCR, we evaluate the performance of the MCR Compiler and four types of $T$~gate optimization compilers, specifically those based on the ZX-calculus, phase-polynomial, and sequential PBC representations.
The schematic workflow of the quantum compiler benchmarking framework is shown in Fig.~\ref{fig: overview}.
Our results show that these state-of-the-art compilers, despite their different optimization frameworks, fail to reduce the $T$-count comparable to that of the original circuit.
In contrast to existing compilers, the MCR Compiler successfully exploits MCR-based transformations and achieves substantial $T$-count reductions.
These results confirm that MCR is not only a nontrivial equivalent transformation rule, but also functions as a practically effective optimization rule that enables more globally aware $T$-count reduction.

The structure of this paper is as follows.
Sec.~\ref{sec: preliminary} provides the preliminary, introducing the Pauli and Clifford groups as well as a set of rotation axes used to define multi-Pauli rotation gates.
Also, we describe a method for separating the Clifford$+T$ circuit into a sequence only composed of Clifford gates and a sequential PBC derived from $T$~gates.
In Sec.~\ref{sec: theory}, we define and formulate a nontrivial gate commutation rule, MCR, which enables transformations not evident from adjacent gate commutativity alone. We also provide a condition for constructing MCR and introduce the MCR-aware optimization pass, including its algorithmic procedure.
Sec.~\ref{sec: benchmark} verifies the nonobviousness of MCR through compiler benchmarks based on quantum circuit unoptimization, and demonstrates the advantage of MCR-based optimization over existing compilers using the MCR Compiler developed in this work.
Finally, Sec.~\ref{sec: conclusion} concludes our study and outlines potential directions for future research.

\section{Preliminary}
\label{sec: preliminary}

\subsection{Multi-Pauli rotations}
\label{subsec: rotation_set}

We begin with defining the $n$-qubit \textit{Pauli group $\mP_n$} as the group generated by $n$-fold tensor products of the single-qubit Pauli operators $\qty{I, X, Y, Z}$, multiplied by coefficients from the set $\qty{\pm1, \pm i}$,
\begin{equation}
  \mP_n = \qty{\pm 1, \pm i} \times \qty{I, X, Y, Z}^{\otimes n}.
  \label{eq:Pauli_group}
\end{equation}
We also use the \textit{Clifford group $\mC_n$} as the set of unitary operators that map every element of $\mP_n$ to another element of $\mP_n$ under conjugation,
\begin{equation}
  \mC_n = \qty{C \mid CPC^{\dagger} \in \mP_n, \, \forall P \in \mP_n}.
  \label{eq:clifford_condition}
\end{equation}
Representative elements of the Clifford group include the Hadamard ($H$), phase ($S$), and CNOT gates,
\begin{align}
  \label{eq:Hadamard}
  H               & = \frac{1}{\sqrt{2}}
  \begin{pmatrix}
    1 & 1  \\
    1 & -1
  \end{pmatrix},                         \\
  \label{eq:S}
  S               & =
  \begin{pmatrix}
    1 & 0 \\
    0 & i
  \end{pmatrix},                         \\
  \label{eq:CNOT}
  \quad{\rm CNOT} & =
  \begin{pmatrix}
    1 & 0 & 0 & 0 \\
    0 & 1 & 0 & 0 \\
    0 & 0 & 0 & 1 \\
    0 & 0 & 1 & 0
  \end{pmatrix}.
\end{align}
Clifford gates alone are not \textit{universal}, meaning that they cannot implement arbitrary unitary operations on quantum circuits~\cite{nielsen_chuang}. To achieve universality, it is necessary to include non-Clifford gates, and a widely used example is the $T$~gate,
\begin{equation}
  \label{eq:T}
  T =
  \begin{pmatrix}
    1 & 0                   \\
    0 & e^{i \frac{\pi}{4}}
  \end{pmatrix}.
\end{equation}

Both the Clifford and $T$~gates can be expressed as products of \textit{multi-Pauli rotation gates}. They are unitary operations of the form
\begin{equation}
  R_{P}\qty(\theta) = e^{-i\theta P /2} = \cos \qty(\frac{\theta}{2})I^{\otimes n} -i\sin \qty(\frac{\theta}{2})P,
  \label{eq:general_rot}
\end{equation}
where $\theta \in (-\pi, \pi]$ is the \textit{rotation angle}, and $P$ is the \textit{rotation axis}, chosen from a set of rotation axes $\mP_n^* \subset \mP_n$, defined as
\begin{equation}
  \mP_n^* \coloneqq \qty{ \pm 1 } \times \qty{I, X, Y, Z}^{\otimes n} \setminus \qty{\pm I^{\otimes n}}.
  \label{eq:rotation_axis_set}
\end{equation}
This set consists of Hermitian Pauli operators excluding the identity $\pm I^{\otimes n}$.
All Clifford operators can be rewritten as sequences of multi-Pauli rotations with angles of the form $k \pi /2$ for some integer $k$.
For example, the Hadamard gate admits the decomposition $\exp(i \pi/2)R_{Z}\qty(\pi/2)R_{X}\qty(\pi/2)R_{Z}\qty(\pi/2)$, and similar expressions exist for $S$ and CNOT. On the other hand, $T$~gate is expressed as a $\pi/4$ multi-Pauli rotation gate $\exp(i \pi/8) R_{Z}\qty(\pi/4)$.
We sometimes denote the multi-qubit Pauli rotation gates with identity operators ($I$) appended as needed to ensure consistency with the total number of qubits in the circuit, e.g., $R_{IXII}(\pi/4) = I R_{X}(\pi/4) II$.

In this paper, we consider a Clifford$+T$ circuit that contains many multi-Pauli rotation gates. To clearly illustrate their structure, we adopt a diagrammatic representation that explicitly shows the rotation axis and angle for each gate~\cite{litinski}. For each qubit, the rotation axis $\qty(\mathbbm{1}, X, Y, Z)$ is written inside a box, and the rotation angle is placed outside. Here, $\mathbbm{1}$ denotes the identity operator $I$. Gates with a rotation angle of $\pm \pi/4$ (i.e., non-Clifford gates) are colored green, while those with a phase of $k \pi /2$ (i.e., Clifford gates) are colored orange. For example, $T$~gate, $S ^{\dagger}$ gate, and $R_{XIZ}\qty(\pi/4)$ are visualized according to this rule as shown in Fig.~\ref{fig:example_litinski}.

Each multi-Pauli rotation gate $R_P\qty(\theta)$ can be decomposed into a gate sequence consisting of a single-qubit $Z$-rotation gate and Clifford gates from the set $\qty{ CNOT, H, S, S^{\dagger} }$.
This decomposition can be achieved in two steps.
First, any rotation of the form $R_{Z \dots Z}\qty(\theta)$ is reduced to a single-qubit $Z$-rotation acting on one of the qubits, combined with CNOT gates. Figure~\ref{fig:zzzzrot} illustrates an example of the transformation of the four-qubit gate $R_{ZZZZ}\qty(\theta)$.
Next, to handle a general Pauli axis $P \in \mP_n^*$, we apply local Clifford gates before and after the $Z$-rotation to convert it into a rotation around $P$.
For example, to change the rotation axis from $Z$ to $X$, we apply Hadamard gates before and after. To convert it into $Y$, we use $\left\{ S^{\dagger}, H \right\}$ before and $\left\{ H, S \right\}$ after.
Using these steps, a gate such as $R_{YXZ}\qty(\theta)$ can be equivalently transformed as shown in Fig.~\ref{fig:yxzrot}.

\begin{figure}[t]
  \centering
  \includegraphics[width=.75\linewidth]{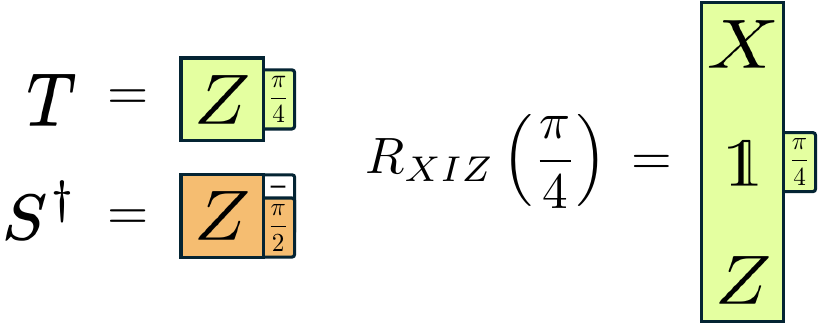}
  \caption{Representation of quantum gates used in this paper. The rotation axis is indicated inside each box, and the rotation angle is shown outside. Gates with rotation angles of $\pm\pi/4$ (non-Clifford) are colored green, while those with $k \pi /2 \ (k \in \mathbb{Z}$, Clifford) are colored orange.}
  \label{fig:example_litinski}
\end{figure}

\begin{figure}[t]
  \centering
  \includegraphics[width=0.70\linewidth]{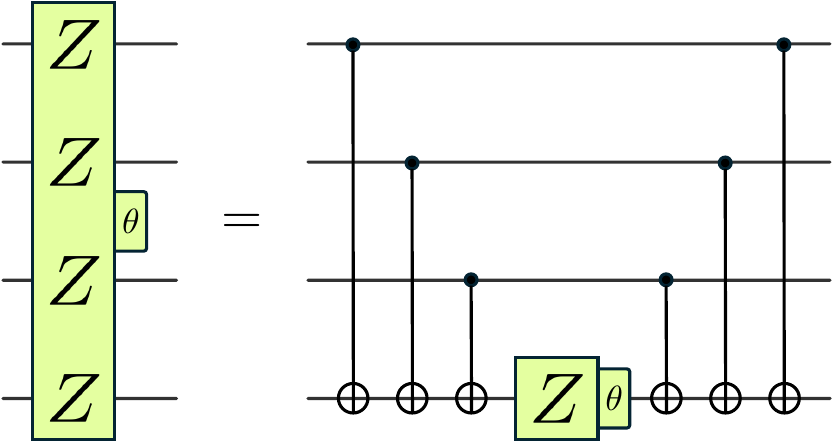}
  \caption{Decomposition of the multi-Pauli rotation gate $R_{ZZZZ}\qty(\theta)$.
    The gate applies a rotation about the tensor product of four Pauli-$Z$ operators, and is decomposed into a standard circuit using a sequence of CNOT gates and a single-qubit $Z$-rotation.}
  \label{fig:zzzzrot}
\end{figure}
\begin{figure}[t]
  \centering
  \includegraphics[width=0.93\linewidth]{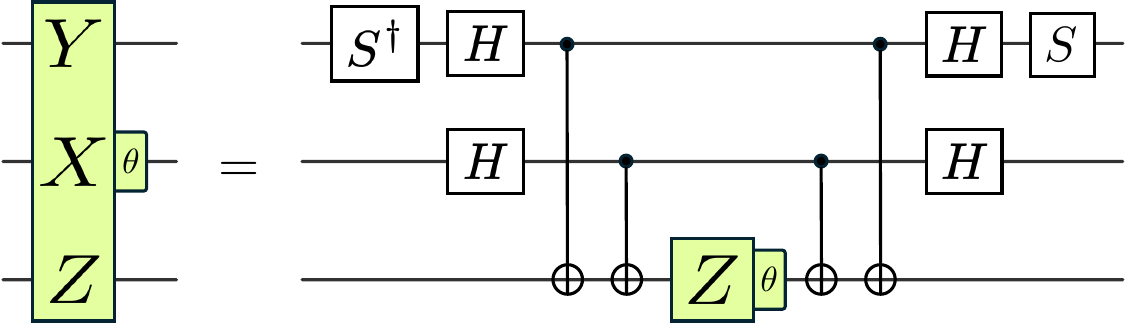}
  \caption{Decomposition of the three-qubit rotation gate $R_{YXZ}\qty(\theta)$. The gate implements a rotation about the tensor product of $Y$, $X$, and $Z$, and is decomposed into a standard circuit using local Clifford gates and a single $Z$-rotation.}
  \label{fig:yxzrot}
\end{figure}

\subsection{Sequential PBC}
\label{subsec: sep_circuits}

We separate a given Clifford$+T$ circuit into Clifford gates and non-Clifford multi-Pauli rotations as introduced in Sec.~\ref{subsec: rotation_set}.
The order of operation between a Clifford gate $C$ and a $T$~gate can generally be exchanged by replacing the $T$~gate with a modified gate $\widetilde{T}$ defined as
\begin{equation}
  \begin{split}
    \widetilde{T} & = CTC^{\dagger}                                              \\
                  & = e^{i \frac{\pi}{8}} C R_Z\qty(\frac{\pi}{4}) C^{\dagger}   \\
                  & = e^{i \frac{\pi}{8}} R_{CZC^{\dagger}} \qty(\frac{\pi}{4}).
    \label{eq:axis_update}
  \end{split}
\end{equation}
By repeatedly applying this operation to adjacent Clifford gates, any $T$~gate can be propagated to the right end of the circuit.
Figure~\ref{fig:t_clifford_commutation} illustrates an example of $T$ propagation.
Using Eq.~\eqref{eq:axis_update}, this operation yields a new rotation axis $-X Y X$ in the end.

\begin{figure}[t]
  \includegraphics[width=1.0\linewidth]{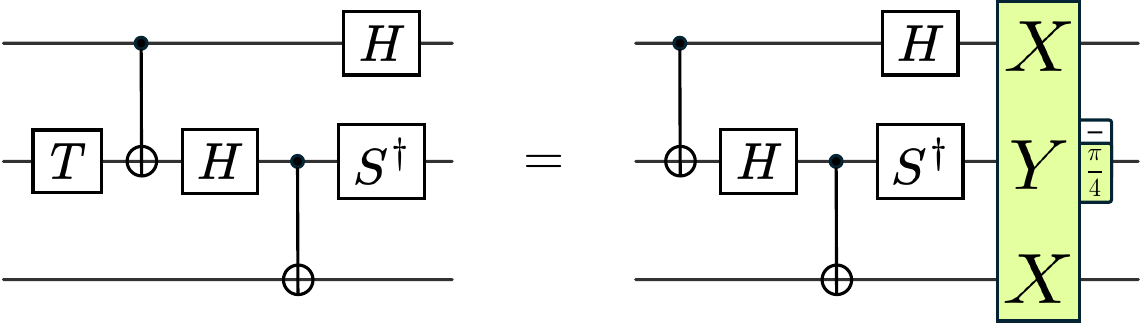}
  \caption{Transformation of a circuit segment containing one $T$~gate surrounded by Clifford gates. In the transformed circuit, the Clifford gates remain unchanged, and a multi-Pauli rotation gate $R_{XYX}\qty(-\pi/4)$ is appended at the right end.
  }
  \label{fig:t_clifford_commutation}
\end{figure}

By repeating this procedure for all $T$~gates in a Clifford$+T$ circuit, we can express the circuit as the product of Clifford gates followed by the \textit{sequential Pauli-based Computation (sequential PBC)} without loss of generality~\cite{gosset}.
That is, given a Clifford$+T$ circuit $U = e^{i\phi}
  C_0'
  T^{\otimes l_1}
  C_1'
  T^{\otimes l_2}
  C_2'
  \cdots
  C_{k-1}'
  T^{\otimes l_k}
  C_{k}'$, where $\phi$ is a global phase, and $T^{\otimes l_i} (i = 1, \cdots , k)$ denotes the $T$~gates acting on $l_i$ qubits between the $(i-1)$th and $i$th Clifford sequences, we can convert it into the following form
\begin{equation}
  U = e^{i\phi} \qty{\prod_{i=1}^k R_{P_i} \qty(\frac{\pi}{4})} C_k.
  \label{eq:litinski_format}
\end{equation}
In this paper, we focus primarily on the non-Clifford part, as it determines the $T$-count of the circuit.
Note that the above procedure can also be applied to Clifford$+R_Z$ circuits, i.e., circuits that consist of Clifford gates and $Z$-rotation gates with arbitrary angles, without loss of generality.

\section{Theory of multiproduct commutation relation}
\label{sec: theory}

In this section, we present the multiproduct commutation relation (MCR), a nontrivial commutation rule that enables certain groups of noncommuting multi-Pauli rotation gates to be reordered. We also establish the necessary and sufficient conditions for MCR construction and show how MCR enables circuit simplification beyond the reach of local optimization rules.
Then, we propose a compiler explicitly including optimization passes derived from MCR.

\subsection{Definition of MCR}
\label{subsec: definition_mcr}

\begin{figure}[t]
  \centering
  \includegraphics[width=0.95\linewidth]{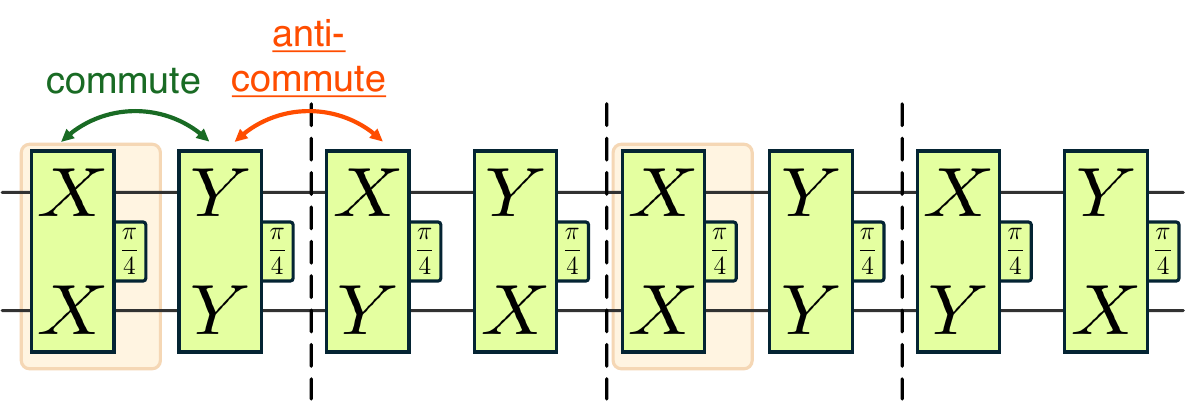}
  \caption{A circuit containing adjacent Pauli rotation gates, some of which commute. However, due to intervening anticommuting gates, no further gate merging for circuit optimization is possible.}
  \label{fig:nontrivial_example}
\end{figure}

We define each segment of sequential PBC, separated by dotted lines in Fig.~\ref{fig:nontrivial_example}, as a \textit{$T$ layer}~\cite{litinski}.
\begin{definition}[$T$ layer]
  \label{def:T_layer}
  In sequential PBC, a segment is called a $T$ layer if
  \begin{enumerate}[label=(\roman*)]
    \item All multi-Pauli rotation gates within the layer mutually commute.
    \item For any multi-Pauli rotation around axis $P$ in the $(i+1)$th layer, there exists at least one multi-Pauli rotation around axis $Q$ in the $i$th layer such that $Q$ anticommutes with $P$.
  \end{enumerate}
\end{definition}

If the rotation axes of non-Clifford gates match exactly in succession, their rotation angles can be combined into a single gate, resulting in $T$-count reduction by two, i.e., $R_{P} \qty(\pm \pi/4) R_{P} \qty(\pm \pi/4) = R_{P} \qty(k\pi/2) \in \mC_n$.
Such merging optimization applies even to nonadjacent gates, provided they belong to the same $T$ layer.
$T$-count optimization compilers exploiting this gate synthesis rule in common $T$ layers have been proposed~\cite{litinski, zhang_algorithm}.

However, some circuits remain unsimplifiable under this approach.
For example, the circuit in Fig.~\ref{fig:nontrivial_example} contains four $T$ layers.
Within each layer, gates such as $R_{XX}\qty(\pi/4)$ and $R_{YY}\qty(\pi/4)$ or $R_{XY}\qty(\pi/4)$ and $R_{YX}\qty(\pi/4)$ commute with each other. In contrast, gates across adjacent layers such as $R_{YY}\qty(\pi/4)$ and $R_{XY}\qty(\pi/4)$ anticommute, making them unswappable and preventing further optimization.
Applying MCR overcomes this limitation, i.e., certain gates can be swapped as composite units while preserving the overall circuit functionality.
In the remainder of this section, we formalize this commutation rule in terms of multi-Pauli rotation axes and demonstrate its use in enabling nontrivial gate reordering.

We define the \textit{multiproduct commutation relation} (MCR) as a specific combination of multi-Pauli rotation gates whose axes satisfy certain commutation and anticommutation conditions.
\begin{definition}[Multi-product commutation relation]
  \label{def:multi_commutation}
  \par
  Let $A, B, C, D \in \mP_n^*$ be mutually distinct (up to a sign $\pm 1$) multi-Pauli axes corresponding to the rotation gates with an angle of $\pi/4$. We say that the pairs $\qty(R_A, R_B)$ and $\qty(R_C, R_D)$ satisfy the multiproduct commutation relation (MCR) if the following three conditions hold:
  \begin{enumerate}
    \item $\commutator{A}{B} = \commutator{C}{D}=0$,
    \item $\anticommutator{A}{C} = \anticommutator{A}{D}=\anticommutator{B}{C} = \anticommutator{B}{D}=0$,
    \item $\commutator{A+B}{C+D} =0$.
  \end{enumerate}
  Here, $\commutator{M_1}{M_2} \coloneqq M_1 M_2 - M_2 M_1$ denotes the commutator and $\anticommutator{M_1}{M_2} \coloneqq M_1 M_2 + M_2 M_1$ denotes the anticommutator.
\end{definition}
\begin{theorem}
  \label{thm:multi_product_commutable}
  Any sequential PBC whose axes satisfy the MCR can be entirely swapped.
  That is, for $\qty(R_A, R_B)$ and $\qty(R_C, R_D)$ satisfying Definition~\ref{def:multi_commutation}, Eq.~\eqref{eq:multi_product_commutation} holds as shown in
  Fig.~\ref{fig:MCR_ABCD},
  \begin{equation}
    \begin{split}
       & R_{D}\qty( \frac{\pi}{4} )
      R_{C}\qty( \frac{\pi}{4} )
      R_{B}\qty( \frac{\pi}{4} )
      R_{A}\qty( \frac{\pi}{4} )    \\
       & =
      R_{B}\qty( \frac{\pi}{4} )
      R_{A}\qty( \frac{\pi}{4} )
      R_{D}\qty( \frac{\pi}{4} )
      R_{C}\qty( \frac{\pi}{4} )
    \end{split}
    \label{eq:multi_product_commutation}
  \end{equation}
  \begin{figure}[h]
    \centering
    \includegraphics[width=.95\linewidth]{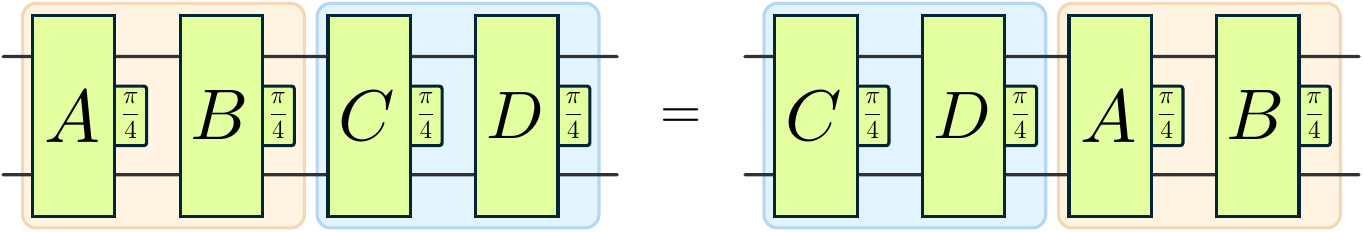}
    \caption{Circuit diagram illustrating the gate sequence satisfying multiproduct commutation relation.}
    \label{fig:MCR_ABCD}
  \end{figure}
\end{theorem}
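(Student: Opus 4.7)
The plan is to reduce Eq.~\eqref{eq:multi_product_commutation} to ordinary commutativity of two explicitly expanded operators. For any $P\in\mP_n^*$ one has $R_P(\pi/4)=cI-isP$ with $c=\cos(\pi/8)$ and $s=\sin(\pi/8)$. Because $\commutator{A}{B}=0$, the product on the right-hand side of the theorem simplifies to
\begin{equation}
  R_B\!\qty(\tfrac{\pi}{4})R_A\!\qty(\tfrac{\pi}{4})=c^2 I-s^2 AB-ics(A+B),
  \label{eq:planRBRA}
\end{equation}
and similarly for $R_D(\pi/4)R_C(\pi/4)$ with $(A,B)$ replaced by $(C,D)$. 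So the content of the theorem is the single algebraic identity
\begin{equation}
  \bigl[c^2 I-s^2 CD-ics(C+D)\bigr]\,\bigl[c^2 I-s^2 AB-ics(A+B)\bigr]\;=\;(\text{reversed order}).
  \label{eq:planGoal}
\end{equation}

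The key step is then to show that the two non-identity operators in each bracket commute with the two non-identity operators in the other bracket. Concretely, I would verify the four pairwise commutations
\begin{equation}
  \commutator{AB}{CD}=\commutator{AB}{C+D}=\commutator{A+B}{CD}=\commutator{A+B}{C+D}=0.
  \label{eq:planFour}
\end{equation}
The first three follow from conditions (1) and (2) of Definition~\ref{def:multi_commutation} by an elementary Pauli sign count: since $A$ and $B$ each anticommute with $C$ and with $D$, the product $AB$ picks up two sign flips when moved past $C$ (resp.\ $D$), so $AB$ commutes with $C$, with $D$, and hence with $CD$ and $C+D$; the symmetric argument handles $A+B$ against $CD$. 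Once these three hold, expanding the product in Eq.~\eqref{eq:planGoal} termwise yields the claim provided the fourth commutator vanishes as well—and that is precisely condition (3).

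The main obstacle, if one can call it that, is recognising that condition (3) is the genuinely nontrivial ingredient: the individual rotations $R_A,R_B,R_C,R_D$ do not pairwise commute (because of condition (2)), so no local reordering argument works, and one really must combine the four gates into the two composite operators in Eq.~\eqref{eq:planRBRA} before any commutation becomes available. Everything else is bookkeeping. After establishing the four relations in Eq.~\eqref{eq:planFour}, the remainder of the proof is just expanding both sides of Eq.~\eqref{eq:planGoal} into nine terms each and matching them, completing the proof of Theorem~\ref{thm:multi_product_commutable}.
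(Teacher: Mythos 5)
Your proof is correct, and it reaches the same structural insight as the paper---namely that conditions~1 and~3 let you treat $R_BR_A$ and $R_DR_C$ as two composite operators and then commute those composites---but it executes the final step differently. The paper collapses each commuting pair into a single exponential, $R_B\qty(\pi/4)R_A\qty(\pi/4)=e^{-i\frac{\pi}{8}(A+B)}$ and $R_D\qty(\pi/4)R_C\qty(\pi/4)=e^{-i\frac{\pi}{8}(C+D)}$, and then invokes the fact that exponentials of commuting operators commute, so condition~3 alone finishes the argument. You instead expand each composite into the Pauli polynomial $c^2I-s^2AB-ics(A+B)$ and verify the four cross-commutators $\qty[AB,CD]$, $\qty[AB,C+D]$, $\qty[A+B,CD]$, $\qty[A+B,C+D]$ termwise, which requires you to also invoke condition~2 (the pairwise anticommutations give the double sign flips that make $AB$ commute with $C$ and $D$). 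Both routes are valid, and your sign-counting steps are sound; the trade-off is that your version is more elementary (no manipulation of operator exponentials) but slightly obscures a point the paper makes explicitly: condition~2 is \emph{not} needed for Eq.~\eqref{eq:multi_product_commutation} to hold---it serves only to exclude the trivial case where all four rotations pairwise commute. Your closing remark that condition~(3) is ``the genuinely nontrivial ingredient'' is exactly right and aligns with the paper's own commentary; just be aware that your proof as written uses strictly more hypotheses than the identity actually requires.
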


\begin{proof}
  From conditions 1 and 3 of Definition~\ref{def:multi_commutation}, we have
  \begin{equation}
    \label{eq:multi_product_commutable_proof}
    \begin{split}
       & \quad  R_{D}\qty( \frac{\pi}{4} )
      R_{C}\qty( \frac{\pi}{4} )
      R_{B}\qty( \frac{\pi}{4} )
      R_{A}\qty( \frac{\pi}{4} )                                                                               \\
       & =e^{-i\frac{\pi}{8} \qty(D+C)} e^{-i\frac{\pi}{8} \qty(B+A)}                                          \\
       & =e^{-i\frac{\pi}{8}\qty(B+A)}e^{-i\frac{\pi}{8}\qty(D+C)}                                             \\
       & =R_{B}\qty(\frac{\pi}{4}) R_{A}\qty(\frac{\pi}{4}) R_{D}\qty(\frac{\pi}{4}) R_{C}\qty(\frac{\pi}{4}).
    \end{split}
  \end{equation}
\end{proof}
Note that condition~2 of Definition~\ref{def:multi_commutation} is not explicitly used in the above derivation. Instead, it serves to exclude the trivial case where all pairs among $R_A, R_B, R_C$, and $R_D$ commute.
In this case, the gates can be freely reordered, and the commutation relation becomes obvious. Condition~2 ensures that the MCR captures only nontrivial gate reorderings that cannot be justified by pairwise commutation alone, which follows from the commutation and anticommutation relations of the underlying Pauli operators.

Consider the circuit shown in Fig.~\ref{fig:nontrivial_circuit}, where Theorem~\ref{thm:multi_product_commutable} enables a nontrivial gate reordering that leads to further optimization.
The circuit remains equivalent even if the pairs $(R_{XY}, R_{YX})$ and $(R_{XX}, R_{YY})$ are swapped, since these four gates satisfy all conditions in Definition~\ref{def:multi_commutation}.
After performing this swap, the sequence of rotation axes becomes $XX, YY, XX, YY, XY, YX, XY, YX$. Since $\commutator{X X}{Y Y}=0$ and $\commutator{X Y}{Y X}=0$, the respective pairs are pairwise commutable. This observation allows each adjacent pair to be merged into a single Clifford gate, resulting in the simplified sequence: $R_{XX}(\pi/2), R_{YY}(\pi/2), R_{XY}(\pi/2), R_{YX}(\pi/2)$.
This example illustrates how MCR enables circuit simplification beyond local rewrite rules by capturing a broader class of gate rearrangement of multi-Pauli rotations.

\begin{figure}[t]
  \centering
  \includegraphics[width=.95\linewidth]{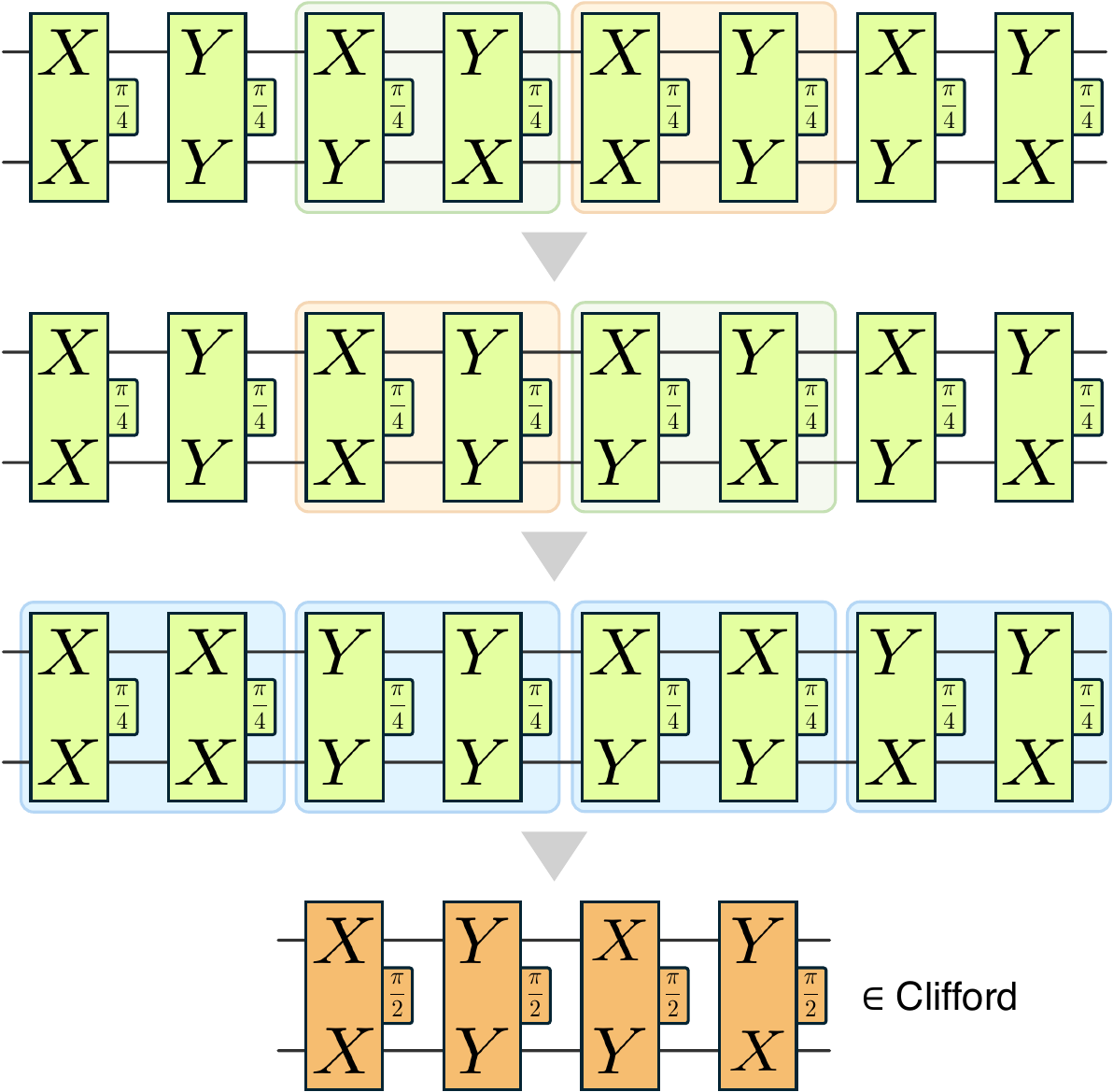}
  \caption{Optimization of the circuit shown in Fig.~\ref{fig:nontrivial_example} using multiproduct commutation relation condition. The circuit remains equivalent under the swap of commuting pairs, $(R_{XX}(\pi/4), R_{YY}(\pi/4))$ and $(R_{XY}(\pi/4), R_{YX}(\pi/4))$. After rearrangement, the circuit is partitioned into two $T$ layers consisting of four gates each. Within each $T$ layer, gates with shared rotation axes can be merged into Clifford gates. As a result, the entire circuit is reduced to a form that can be expressed using only Clifford operations.}
  \label{fig:nontrivial_circuit}
\end{figure}

A natural question is whether the gate sequence satisfying MCR arises in quantum circuits for practically relevant algorithms.
In fact, such structures appear ubiquitously in Trotterized quantum circuits for simulating quantum many-body spin dynamics, a class of algorithms that has been explored as a benchmark task demonstrating quantum advantage~\cite{quantum_utility,yoshioka2024hunting, arute_fermi_hubbard,alam_2d_72_qubits,yu_spin_chains}.
As a representative example, we focus on the kicked spin dynamics of the XY model. This system is described by the Hamiltonian
\begin{equation}
  H= J\sum_{\langle i, j \rangle} \qty(X_i X_j+ Y_i Y_j) + h_z\sum_i Z_i.
  \label{eq: xy_model}
\end{equation}
Here, $\langle i, j \rangle$ denotes nearest-neighbor pairs, $J$ is the strength of interaction, and $h_z$ represents the strength of the longitudinal magnetic field~\cite{xy_model}.
In digital quantum simulation, the time evolution governed by this Hamiltonian is implemented via Trotter decomposition, resulting in a quantum circuit with alternating layers of two-qubit rotations, consisting of $R_{XX}$ and $R_{YY}$, and layers of single-qubit $Z$-rotations $R_{Z}$. In this setting, the pairs $(R_{XX}, R_{YY})$ and $(R_{ZI}, R_{IZ})$ satisfy the conditions in Definition~\ref{def:multi_commutation}. Consequently, these gates can be rearranged such that all $Z$ terms are grouped together and subsequently merged. Based on this model, we benchmark the performance of our proposed compiler incorporating MCR-based circuit transformations (described in Sec.~\ref{subsec: mcr_aware_compiler}) and compare it with conventional optimization frameworks in Sec.~\ref{subsec: mcr_application}.
The rearrangement of $Z$-rotations follows directly from particle-number conservation in this model. More generally, even for complex many-body Pauli rotations, appropriate Clifford transformations can induce local structures that admit similar reorderings, which are systematically captured by MCR.

\subsection{Systematic design of sequential PBC satisfying MCR}
\label{subsec: mcr_construction_condition}

We now explain a method that enables efficient construction of sequential PBC satisfying MCR.
The following theorem provides the requirements under which four distinct rotation axes give rise to an MCR construction.
\begin{theorem}[Construction of rotation axes satisfying MCR]
  \label{thm:abc_multi_commutation}
  Let $A, B, C, D \in \mP_n^*$, and let all rotation angles be $\pi/4$. Here, $A, B, C, D$ are assumed to have mutually distinct rotation axes, up to a sign $\pm 1$.
  Suppose $A, B, C$ satisfy $\commutator{A}{B}=0$ and $\anticommutator{A}{C}=\anticommutator{B}{C}=0$.
  Then, the following (I) and (II) are equivalent.
  \begin{enumerate}[label=(\Roman*)]
    \item $A, B, C$, and $D$ satisfy all of the following (i)--(iii):
          \begin{enumerate}[label=(\roman*)]
            \item $\commutator{C}{D}=0$,
            \item $\anticommutator{A}{D}=\anticommutator{B}{D}=0$,
            \item $\commutator{A+B}{C+D}=0$.
          \end{enumerate}
    \item $D=-ABC$.
  \end{enumerate}
\end{theorem}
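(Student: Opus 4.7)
My plan is to prove the biconditional separately in each direction. For (II) $\Rightarrow$ (I), I would substitute $D = -ABC$ directly into conditions (i)--(iii) and verify each by direct calculation, using $A^2 = B^2 = C^2 = I$, $[A,B] = 0$, and $\{A,C\} = \{B,C\} = 0$. The checks $[C, -ABC] = 0$ and $\{A, -ABC\} = \{B, -ABC\} = 0$ reduce to routinely commuting $A, B, C$ past one another. The essential computational identity is $(A+B)(-ABC) = -BC - AC = -(A+B)C$, which immediately gives $(A+B)(C+D) = 0$; combined with $\{A+B, C+D\} = 0$ (a direct consequence of (ii)), this yields $[A+B, C+D] = 2(A+B)(C+D) = 0$.

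For (I) $\Rightarrow$ (II), I would set $D_0 := -ABC$ (a Hermitian signed Pauli squaring to $I$, as just verified) and define $F := D D_0$. Since $D$ and $D_0$ share identical commutation patterns with $A, B, C$ under (i)--(ii), and $[D, D_0] = 0$ follows from these, $F$ is itself a Hermitian signed Pauli that commutes with each of $A, B, C$. Substituting $D = F D_0$ into (iii) and using $(A+B) D_0 = -(A+B)C$ together with the commutativity of $F$ with $A, B, C$, the relation reduces to $(I - F)(A+B)C = 0$.

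The crux is to conclude $F = I$. I would first identify $\ker((A+B)C)$: a vector $v$ lies in this kernel iff $Cv$ lies in $\ker(A+B)$, which coincides with the $-1$ eigenspace of $AB$; since $C$ commutes with $AB$ and preserves this eigenspace, $\ker((A+B)C)$ equals the $-1$ eigenspace of $AB$, of dimension exactly $2^{n-1}$ under the distinctness $A \neq \pm B$. The equation $(I-F)(A+B)C = 0$ then forces the $-1$ eigenspace of $F$ to lie inside this kernel. A Hermitian signed Pauli in $\mP_n$ has $-1$ eigenspace of dimension $0$, $2^{n-1}$, or $2^n$; the case $F = -I$ is excluded because it would force $(A+B)C = 0$, contradicting $A \neq \pm B$, while a nontrivial $F$ must have its $-1$ eigenspace coincide with that of $AB$ (by matching dimensions), forcing $F = AB$ and hence $D = F D_0 = -C$, contradicting the distinctness $D \neq \pm C$. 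Therefore $F = I$ and $D = -ABC$.

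The principal obstacle is this last dimensional rigidity argument. Obtaining the rewriting of (iii) into the clean form $(I - F)(A+B)C = 0$ is the key algebraic reduction; once achieved, the combination of Pauli eigenspace rigidity (each nontrivial Pauli has exactly half-dimensional $\pm 1$ eigenspaces) with the stated distinctness hypothesis is what pins down $F = I$ uniquely, which is not evident directly from the MCR conditions alone.
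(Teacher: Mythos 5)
Your proof is correct, and the hard direction $(\mathrm{I})\Rightarrow(\mathrm{II})$ takes a genuinely different route from the paper's. The paper works combinatorially: using (ii), condition (iii) collapses to $AC+AD+BC+BD=0$, and since distinct signed Pauli strings are linearly independent (orthogonal under the Hilbert--Schmidt inner product), the four terms must cancel in pairs; a short case analysis then shows the pairing $(AD,BD)=(-AC,-BC)$ forces $A=-B$, the pairing $(AD,BD)=(-AC,\,-BC)$ with $D=-C$ is likewise excluded by distinctness, and the only surviving pairing gives $AD=-BC$, i.e.\ $D=-ABC$. Your argument instead isolates the ``defect'' operator $F=DD_{0}$ with $D_{0}=-ABC$, shows $F$ is a Hermitian signed Pauli commuting with $A,B,C$ (this step uses (i) and (ii), which the paper's forward direction happens not to need), reduces (iii) to $(I-F)(A+B)C=0$, and pins down $F=I$ by eigenspace rigidity, with $F=-I$ excluded by $A\neq -B$ and $F=AB$ (equivalently $D=-C$) excluded by $D\neq\pm C$. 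I checked the key computations --- $[D,D_{0}]=0$, $(A+B)D_{0}=-(A+B)C$, $\ker((A+B)C)$ equal to the $-1$ eigenspace of $AB$, and the dimension count --- and they all hold; note only that the containment of the $-1$ eigenspace of $F$ in $\ker((A+B)C)$ requires invoking the commutativity of $F$ with $(A+B)C$, which you have established. The paper's route is shorter and more elementary; yours is heavier but more structural, in that it exhibits the full solution set of (i)--(iii) without the distinctness hypothesis (namely $D=-ABC$ and $D=-C$, up to the degenerate $A=-B$ case) and makes explicit which alternative is killed by which hypothesis.
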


\begin{proof}
  See Appendix~\ref{sec: proof_of_mcr}.
\end{proof}
Theorem~\ref{thm:abc_multi_commutation} provides a constructive method for generating rotation axes that satisfy MCR.
For instance, in the circuit from Fig.~\ref{fig:nontrivial_example}, we can set $A=XY$, $B=YX$, and $C=XX$. Then, $-ABC =YY$,
confirming that the rotation axis of the fourth gate satisfies $D=-ABC$.
While this example involves two-qubit multi-Pauli rotation gates, we can apply MCR to $n$-qubit multi-Pauli rotation gates as well.
In this case, the total number of combinations of rotation axes that satisfy MCR grows exponentially with $n$. For details, see Appendix~\ref{sec: mcr_scaling}.

\subsection{MCR Compiler}
\label{subsec: mcr_aware_compiler}

We develop the MCR Compiler, which explicitly incorporates two fundamental MCR-based circuit rewrites, \texttt{StandardMCR} and \texttt{EndpointMCR}, into the optimization pipeline.
The pseudocode of these passes is shown in Algorithm~\ref{alg: mcr_swap}.
Before applying these rewrites, the compiler partitions a sequential PBC into $T$~layers, as defined in Definition~\ref{def:T_layer}. This preprocessing step is denoted as \texttt{TLayerPartition} in Algorithm~\ref{alg: mcr_compiler}.

\texttt{StandardMCR} operates on two consecutive $T$~layers $L_{\rm left}$ and $L_{\rm right}$. It selects a pair of distinct rotations $R_A(\pi/4), R_B(\pi/4) \in L_{\rm left}$, and scans $L_{\rm right}$ for a rotation $R_C(\pi/4)$ whose axis anticommutes with both $A$ and $B$.
For each candidate $R_C(\pi/4)$, the algorithm checks the presence of a rotation $R_D(\pi/4) \in L_{\rm right}$ with axis $D=\pm ABC$.
If a rotation with $D=-ABC$ is found, the pairs $(R_A(\pi/4), R_B(\pi/4))$ and $(R_C(\pi/4), R_D(\pi/4))$ are swapped as depicted in Fig.~\ref{fig:MCR_ABCD}.
If instead $D=ABC$, the rotation $R_A(\pi/4)$ is decomposed into $R_A(\pi/2)$ followed by $R_A(-\pi/4)$, after which the MCR-based swapping is applied.

\texttt{EndpointMCR} operates on three consecutive $T$~layers $L_{\rm left}$, $L_{\rm middle}$, and $L_{\rm right}$, where $L_{\rm middle}$ contains exactly two rotations $R_A(\pi/4)$ and $R_B(\pi/4)$. This pass searches for a rotation $R_C(\pi/4) \in L_{\rm left}$ whose rotation axis anticommutes with both $A$ and $B$. Then, it checks whether there exists a $R_D(\pi/4) \in L_{\rm right}$ with $D=\pm ABC$.
If a rotation with $D=ABC$ is found, the position of $R_C(\pi/4)$ and $R_D(\pi/4)$ can be exchanged directly as illustrated in Fig.~\ref{fig:endpoint_mcr}.
If instead $D=-ABC$, the rotation $R_C(\pi/4)$ is decomposed into $R_C(\pi/2)$ followed by $R_C(-\pi/4)$, after which the same MCR-based swapping is applied.
Note that the sign condition for introducing an additional rotation with an angle of $\pi/2$ is reversed compared to \texttt{StandardMCR}.

\begin{figure}[htbp]
  \begin{algorithm}[H]
    \caption{MCR-based optimization passes}
    \begin{algorithmic}[1]
      \Function{StandardMCR}{$L_{\rm left}, L_{\rm right}$}
      \ForAll{\textbf{distinct} $R_A(\pi/4), R_B(\pi/4) \in L_{\rm left}$}
      \ForAll{$R_C(\pi/4)\in L_{\rm right}$}
      \If{$\anticommutator{A}{C}=\anticommutator{B}{C}=0$}
      \ForAll{$R_D(\pi/4) \in L_{\rm right}$}
      \If{$D=-ABC$}
      \State $L_{\rm left}$.remove($[R_A(\pi/4), R_B(\pi/4)]$)
      \State $L_{\rm left}$.append($[R_C(\pi/4), R_D(\pi/4)]$)
      \State $L_{\rm right}$.remove($[R_C(\pi/4), R_D(\pi/4)]$)
      \State $L_{\rm right}$.prepend($[R_A(\pi/4), R_B(\pi/4)]$)
      \State \Return $\qty(L_{\rm left}, L_{\rm right})$
      \ElsIf{$D=ABC$}
      \State $L_{\rm left}$.remove($[R_A(\pi/4), R_B(\pi/4)]$)
      \State $L_{\rm left}$.append($[R_A(\pi/2), R_C(\pi/4), R_D(\pi/4)]$)
      \State $L_{\rm right}$.remove($[R_C(\pi/4), R_D(\pi/4)]$)
      \State $L_{\rm right}$.prepend($[R_A(-\pi/4), R_B(\pi/4)]$)
      \State \Return $\qty(L_{\rm left}, L_{\rm right})$
      \EndIf
      \EndFor
      \EndIf
      \EndFor
      \EndFor
      \State \Return $\qty(L_{\rm left}, L_{\rm right})$
      \EndFunction
      \State
      \Function{EndpointMCR}{$L_{\rm left}, L_{\rm middle}, L_{\rm right}$}
      \If{$\lvert L_{\rm middle} \rvert \neq 2$}

      \State \Return $\qty(L_{\rm left}, L_{\rm middle}, L_{\rm right})$
      \EndIf
      \State $\qty(R_A(\pi/4), R_B(\pi/4)) \gets$ the two rotations in $L_{\rm middle}$
      \ForAll{$R_C(\pi/4) \in L_{\rm left}$}
      \If{$\anticommutator{A}{C}=\anticommutator{B}{C}=0$}
      \ForAll{$R_D(\pi/4) \in L_{\rm right}$}
      \If{$D=ABC$}
      \State $L_{\rm left}$.remove($R_C(\pi/4)$)
      \State $L_{\rm left}$.append($R_D(\pi/4)$)
      \State $L_{\rm right}$.remove($R_D(\pi/4)$)
      \State $L_{\rm right}$.prepend($R_C(\pi/4)$)
      \State \Return $\qty(L_{\rm left}, L_{\rm middle}, L_{\rm right})$
      \ElsIf{$D=-ABC$}
      \State $L_{\rm left}$.remove($R_C(\pi/4)$)
      \State $L_{\rm left}$.append($[R_C(\pi/2), R_D(\pi/4)]$)
      \State $L_{\rm right}$.remove($R_D(\pi/4)$)
      \State $L_{\rm right}$.prepend($R_C(-\pi/4)$)
      \State \Return $\qty(L_{\rm left}, L_{\rm middle}, L_{\rm right})$
      \EndIf
      \EndFor
      \EndIf
      \EndFor
      \State \Return $\qty(L_{\rm left}, L_{\rm middle}, L_{\rm right})$
      \EndFunction
    \end{algorithmic}
    \label{alg: mcr_swap}
  \end{algorithm}
\end{figure}
\begin{figure}[t]
  \begin{algorithm}[H]
    \caption{MCR Compiler}
    \begin{algorithmic}[1]
      \Input\textbf{:}
      \Desc{$S_0$}{$n$-qubit sequential PBC}
      \Desc{$N_{\rm iter}$}{Optimization iteration count}
      \EndInput
      \State $C_{\rm out} \gets$ empty circuit
      \State $S \gets S_0$
      \For{$t = 0$ to $N_{\rm iter}-1$}
      \State     $\qty[L_0, L_1, ..., L_{m-1}] \gets$ \texttt{TLayerPartition}($S$)
      \State $x \gets $ choose from \{\texttt{StandardMCR}, \texttt{EndpointMCR}\}
      \If{$x=\texttt{StandardMCR}$}
      \For{\textbf{each} consecutive pairs $\qty(L_j, L_{j+1})$}
      \State $\qty(L_j, L_{j+1}) \gets$\texttt{StandardMCR}($L_j, L_{j+1}$)
      \EndFor
      \Else
      \For{\textbf{each} consecutive triples $\qty(L_j, L_{j+1}, L_{j+2})$}
      \State $\qty(L_j, L_{j+1}, L_{j+2}) \gets$\texttt{EndpointMCR}($L_j, L_{j+1}, L_{j+2}$)
      \EndFor
      \EndIf
      \State     $C$, $S \gets$ \texttt{MergeAndExtract}($\qty[L_0, ..., L_{m-1}]$)
      \State     $C_{\rm out}$.append($C$)
      \EndFor
      \Output\textbf{:}
      \Desc{$C_{\rm out}$}{Extracted Clifford circuit}
      \Desc{$S$}{Optimized sequential PBC}
      \EndOutput
    \end{algorithmic}
    \label{alg: mcr_compiler}
  \end{algorithm}
\end{figure}

\begin{figure}[t]
  \centering
  \includegraphics[width=.75\linewidth]{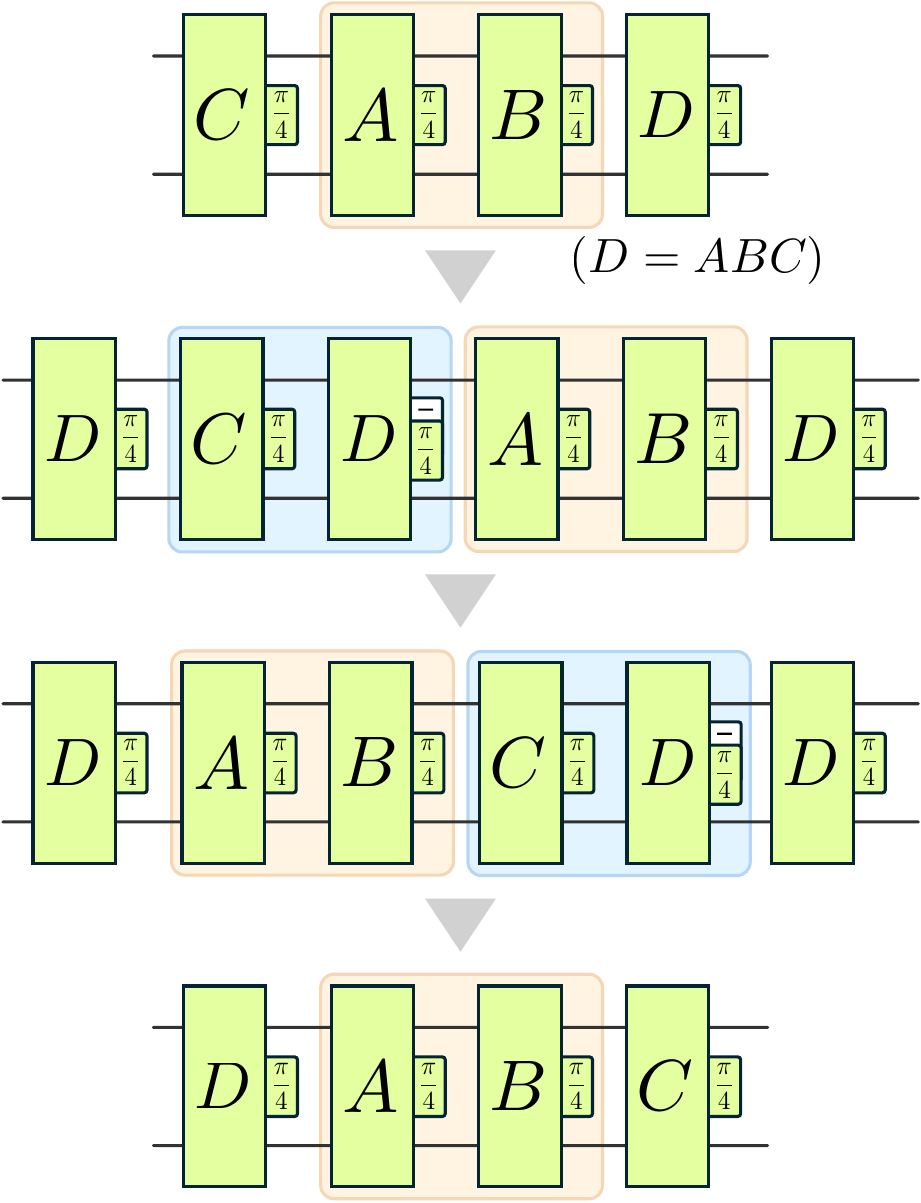}
  \caption{\texttt{EndpointMCR} for $D=ABC$.
  By introducing the sign-flipped rotation $R_D(-\pi/4)$ and then applying multiproduct commutation relation, the resulting locally synthesized circuit becomes equivalent to one in which the positions of $R_C(\pi/4)$ and $R_D(\pi/4)$ are exchanged.}
  \label{fig:endpoint_mcr}
\end{figure}

We now describe the overall workflow of the MCR Compiler, summarized in Algorithm~\ref{alg: mcr_compiler}.
Starting from an input sequential PBC $S_0$, the compiler performs a fixed number of optimization iterations $N_{\rm iter}$, each consisting of an MCR-based rewrite pass followed by local rotation synthesis.
At each iteration, the current circuit $S$ is first partitioned into $T$~layers via \texttt{TLayerPartition}.
A rewrite pass $x\in \qty{\texttt{StandardMCR}, \texttt{EndpointMCR}}$ is then selected and applied to the layers.
In our implementation, the two passes are chosen alternately, starting and ending with \texttt{EndpointMCR}.
Depending on the choice of $x$, the pass is applied either to consecutive layer pairs $\qty(L_j, L_{j+1})$ (\texttt{StandardMCR}) or to consecutive layer triples $\qty(L_j, L_{j+1}, L_{j+2})$ (\texttt{EndpointMCR}).
Rather than exploring all equivalent circuit representations globally, the MCR Compiler restricts the search to a subspace defined by the partition of the circuit into $T$~layers. Within this layered structure, \texttt{StandardMCR} and \texttt{EndpointMCR} sequentially scan candidate rewrites across neighboring layers and apply the first valid solution found, keeping the search computationally tractable.

The resulting layers are then simplified by \texttt{MergeAndExtract}, which merges successive rotations with identical axes and extracts Clifford gates as described in Sec.~\ref{subsec: sep_circuits}.
The extracted Clifford component is accumulated into $C_{\rm out}$, while the remaining non-Clifford part defines the updated sequential PBC~$S$.
In addition, the compiler may optionally insert local identity padding between optimization iterations. Specifically, when three multi-Pauli rotations $R_A(\pi/4), R_B(\pi/4)$, and $R_C(\pi/4)$ satisfy conditions~1 and 2 of Definition~\ref{def:multi_commutation}, the adjacent pair $R_{ABC}(\pi/4)$ and $R_{ABC}(-\pi/4)$ is introduced while preserving circuit equivalence. This local expansion can create additional opportunities for \texttt{StandardMCR} rewrites, thereby facilitating further simplifications by \texttt{MergeAndExtract}.

\section{Compiler benchmark using MCR}
\label{sec: benchmark}

In this section, we benchmark the MCR Compiler and investigate whether existing compilers can exploit MCR-based transformations. We first introduce the concept of quantum circuit unoptimization, an operation that transforms a circuit into a more complex one while preserving its overall functionality. We then outline the associated benchmarking framework using the $T$-count as a quantitative evaluation metric. Finally, we compare the optimization performance of the MCR Compiler with that of existing compilers.

\subsection{Quantum circuit unoptimization}
\label{subsec: def_unopt}

We use the term \textit{quantum circuit unoptimization} to describe the process of deliberately introducing gate redundancies while keeping the circuit equivalent.
The following definition formalizes this concept.

\begin{figure}[t]
  \begin{center}
    \includegraphics[width=1.0\linewidth]{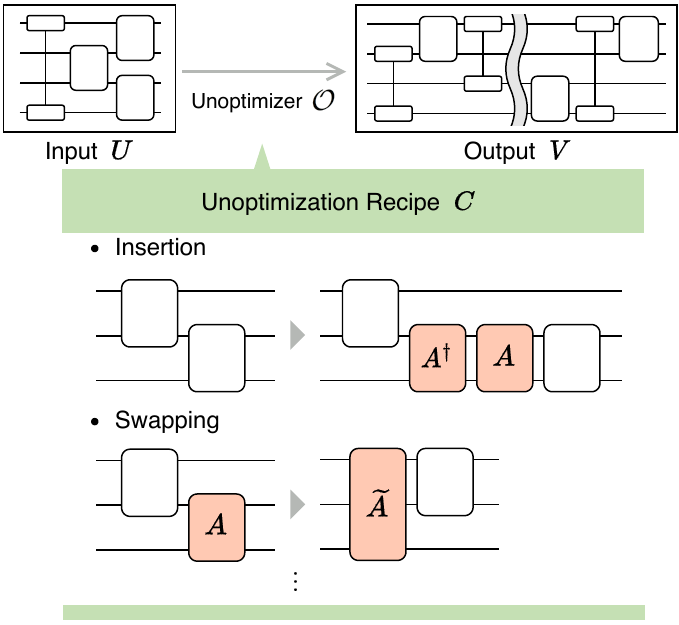}
  \end{center}
  \caption{Schematic illustration of quantum circuit unoptimization. The basic operations consist of gate insertion and gate swapping, which are used to construct an unoptimization recipe. Applying this recipe to an input circuit $U$ produces a redundant circuit $V$ that is equivalent to $U$ up to a global phase.}
  \label{fig:concept_of_unopt}
\end{figure}

\begin{definition}[Quantum circuit unoptimization~\cite{unoptimization}]
  Quantum circuit unoptimization is the process of generating a redundant $n$-qubit circuit $V$ from an $n$-qubit circuit $U$, as expressed by
  \begin{equation}
    V=\mathcal{O}\left( U, C \right).
    \label{eq:def_unopt}
  \end{equation}
  Here, $C$ is a recipe constructed from elements of the set of basic unoptimization operations $\mS$.
  $\mathcal{O}$ is a function (Unoptimizer) that transforms $U$ into $V$ based on the information in $C$.
  We assume that both $U$ and $V$ are $\poly (n)$ gates, and the operation generating $V$ from $U$ can be efficiently executed on a classical computer.
  If the conditions for unoptimization are satisfied, $V$ is equivalent to $U$ up to a global phase $\phi$, i.e.,
  \begin{align*}
    V^{\dagger}U = e^{i\phi}I,
    \label{eq:equivalence}
    \stepcounter{equation}\tag{\theequation}
  \end{align*}
  where $I$ represents the $n$-qubit identity operator.
\end{definition}
Figure~\ref{fig:concept_of_unopt} illustrates a diagram of quantum circuit unoptimization. Given an input quantum circuit $U$, a more complex quantum circuit $V$ is generated by applying a recipe that describes how to introduce redundancy.
To define such a recipe, it is necessary to determine the set of allowed basic unoptimization operations $\mS$. Note that only circuit operations that preserve the equivalence of the quantum circuit before and after the operation can be included in $\mS$.
The complexity of the unoptimized circuit depends critically on the choice of $\mS$.
For example, if we take $\mS = \qty{\text{Insert two $X$ gates at random positions in the circuit}}$, this satisfies the condition that the unoptimization operations preserve circuit equivalence. However, such redundancy is trivial: any compiler that recognizes $X^2=I$ will immediately remove the inserted gates.

To generate more meaningful redundancy, we therefore set $\mS = \qty{\text{\textit{Gate insertion, Gate swapping}}}$ and construct these operations based on MCR. Specifically, redundancies are introduced by inserting pairs of multi-Pauli rotations whose angles cancel each other along the same rotation axis, as well as by applying gate swapping as shown in Fig.~\ref{fig:MCR_ABCD}.
By combining these operations, we systematically generate circuits that are equivalent to the original circuit while having a larger $T$-count, which are used as benchmark circuits in this work.
The detailed construction procedure of these unoptimized circuits is described in Appendix~\ref{sec: mcr_unopt}.

\subsection{Numerical experiments}
\label{subsec: numerical_experiments}

We present numerical experiments evaluating how well existing compilers can optimize circuits containing MCR-based gate structures.
We focus on four $T$~gate optimization compilers: Pytket, PyZX, TMerge, and FastTODD.
To assess compiler performance, we use the unoptimized circuit $V$ generated by
Algorithm~\ref{alg: gen_unopt} in Appendix~\ref{sec: mcr_unopt}
as input.
We explain how to evaluate compiler performance using the unoptimized circuit $V$ in Algorithm~\ref{alg: benchmark}. A workflow of this benchmark task is illustrated in Fig.~\ref{fig:benchmark_example}.
Each unoptimized circuit has a $T$-count denoted as $t_{\rm unopt}$ and is passed to a compiler for optimization, resulting in a circuit with $T$-count $t_{\rm opt}$.
We quantify the optimization performance using the $T$-count reduction rate $p$ defined as
\begin{equation}
  \label{eq:reduction_rate}
  p = \frac{t_{\rm unopt}-t_{\rm opt}}{t_{\rm unopt}-t_{\rm original}}.
\end{equation}
Here, $t_{\rm original}$ is the $T$-count of the input circuit before unoptimization. In this paper, we use the single multi-Pauli rotation gate $R_{ZZ\dots Z}\qty(\pi/4)$ as the input circuit, which has a theoretical $T$-count of 1, i.e., $t_{\rm original} = 1$.
A larger value of $t_{\rm unopt}$ indicates a higher degree of unoptimization. The greater the difference between $t_{\rm unopt}$ and $t_{\rm opt}$, the better the performance of the compiler, which corresponds to a higher value of $p$.
Since $p$ is normalized by the maximum possible reduction in $T$-count, it always falls within the range of $0$ to $1$. In particular, $p=1$ indicates perfect optimization, where the compiler completely restores the original circuit.

\begin{figure}[t]
  \begin{algorithm}[H]
    \caption{Compiler benchmark}
    \begin{algorithmic}[1]
      \State \textbf{Input:} $U$ $n$-qubit circuit
      \State Obtain an unoptimized circuit $V$ using \textbf{Algorithm 4}
      \State Decompose $U$ into a Clifford+$T$~gate set
      \State Obtain $T$-count: $t_{\text{original}}$
      \State Decompose $V$ into a Clifford+$T$~gate set
      \State Obtain $T$-count: $t_{\text{unopt}}$
      \State Apply $T$~gate optimization compiler to $V$
      \State Obtain optimized $T$-count: $t_{\text{opt}}$
      \State Compute gate reduction rate: $p$
      \State \quad $p \gets (t_{\text{unopt}} - t_{\text{opt}}) / (t_{\text{unopt}} - t_{\text{original}})$
      \State \textbf{Output:} $t_{\text{unopt}}, t_{\text{opt}}, p$
    \end{algorithmic}
    \label{alg: benchmark}
  \end{algorithm}
\end{figure}
\begin{figure}[t]
  \centering
  \includegraphics[width=1.0\linewidth]{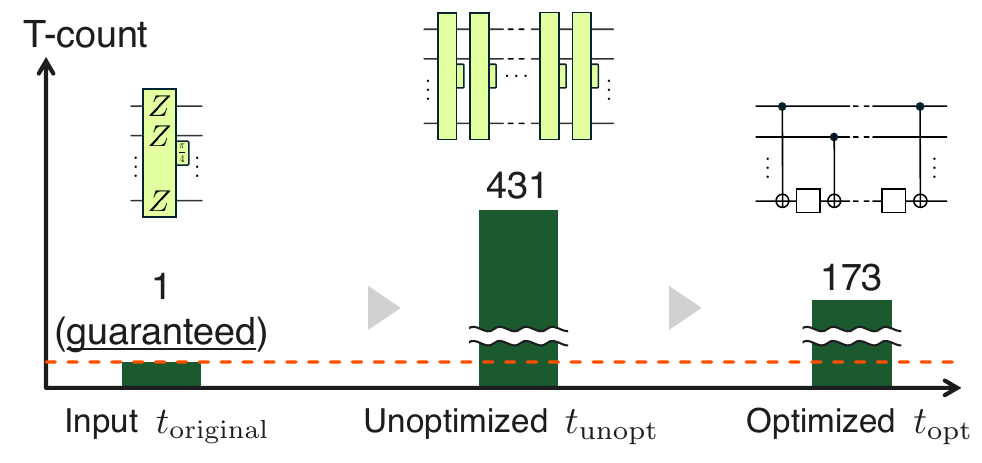}
  \caption{An example workflow for assessing quantum compiler performance using an unoptimized benchmark circuit.
    The input circuit is the rotation $R_{ZZ\dots Z}\qty(\pi/4)$, whose theoretical $T$-count is 1 (indicated by the red dashed line).
    In this illustrative case, applying Algorithm~\ref{alg: gen_unopt} in Appendix~\ref{sec: mcr_unopt} produces an unoptimized Clifford$+T$ circuit $V$ with $t_{\rm unopt} = 431$.
    We then decompose $V$ into a Clifford$+T$ gate set and pass the resulting circuit to a compiler. This yields an optimized circuit with $t_{\rm opt} = 173$, corresponding to a 60\% $T$-count reduction as defined in Eq.~\eqref{eq:reduction_rate}.}
  \label{fig:benchmark_example}
\end{figure}

We use the following four state-of-the-art $T$-gate optimization compilers in our experiments:
\begin{enumerate}
  \item Pytket~\cite{pytket}\\
        Developed by Quantinuum, Pytket is a general-purpose compiler that supports optimization for NISQ devices. Although its primary focus is on circuit depth reduction, it also includes a $T$-count optimization pass. In our experiments, we use \verb|RemoveRedundancies| to eliminate unnecessary $T$~gates.
  \item PyZX~\cite{pyzx}\\
        PyZX applies optimization techniques based on ZX-calculus, which represents a quantum circuit as a graphical diagram. It includes graph-theoretic optimization techniques such as local complementation and pivoting~\cite{graph_theoretic_zx_culculus,phase_gadget_zx_calculus} using $X$-Spiders and $Z$-Spiders. We use \verb|full_reduce| function for maximum simplification in the ZX-diagram, and then convert it back to a quantum circuit.
  \item TMerge~\cite{zhang_algorithm}\\
        TMerge reduces the $T$-count by exploiting the commutativity of Pauli rotation axes. As described in Sec.~\ref{subsec: sep_circuits}, it first separates the input Clifford+$T$ circuit into a Clifford gate sequence and a sequential PBC. This compiler can reorder them within each $T$ layer and merge rotation gates that have the same axis.
  \item FastTODD~\cite{fasttodd}\\
        FastTODD uses phase-polynomial as an IR and optimizes the $T$-count by solving the symmetric tensor rank decomposition problem~\cite{tensor_rank}. It removes all internal Hadamard gates by introducing ancilla qubits and applies a fast version of the Third-Order Duplicate and Destroy (TODD) algorithm~\cite{campbell_compiler} to minimize the $T$-count.
\end{enumerate}

\begin{table*}[t]

  \caption{Comparison of $T$~gate optimization results across five compilers. $T$-counts before ($t_{\rm unopt}$) and after ($t_{\rm opt}$) optimization are shown for quantum circuits with $n$ qubits. The percentage reduction in $T$-count is denoted as Red.~(\%), defined by Eq.~\eqref{eq:reduction_rate}. For \textbf{FastTODD}, $n_h$ indicates the number of ancilla qubits introduced during Hadamard gadgetization. We report the optimization results averaged over 100 samples.}
  \setlength{\tabcolsep}{5pt}
  \centering
  \begin{tabular}{llrrrrrrrrrrr}
    \toprule
    \multirow{2}{*}[-1.07em]{$n$} & \multirow{2}{*}[-1.07em]{$t_{\rm unopt}$} & \multicolumn{2}{c}{\textbf{Pytket}} & \multicolumn{2}{c}{\textbf{PyZX}} & \multicolumn{2}{c}{\textbf{TMerge}} & \multicolumn{3}{c}{\textbf{FastTODD}} & \multicolumn{2}{c}{\textbf{MCR Compiler}}                                                                                    \\
    \cmidrule(lr){3-4} \cmidrule(lr){5-6} \cmidrule(lr){7-8} \cmidrule(lr){9-11} \cmidrule(lr){12-13}
                                  &                                           & $t_{\rm opt}$                       & Red. (\%)                         & $t_{\rm opt}$                       & Red. (\%)                             & $t_{\rm opt}$                             & Red. (\%) & $t_{\rm opt}$ & $n_h$  & Red. (\%) & $t_{\rm opt}$ & Red. (\%)       \\
    \midrule                      
    2                             & 46.78                                     & 43.44                               & 7.28                              & 32.68                               & 30.77                                 & 32.68                                     & 30.77     & 32.56         & 18.72  & 31.03     & 1.00          & \textbf{100.00} \\
    3                             & 106.34                                    & 104.26                              & 1.98                              & 94.84                               & 10.92                                 & 94.84                                     & 10.92     & 93.45         & 57.77  & 12.24     & 1.10          & \textbf{99.91}  \\
    4                             & 190.34                                    & 189.74                              & 0.32                              & 185.50                              & 2.56                                  & 185.50                                    & 2.56      & 185.20        & 115.80 & 2.71      & 1.00          & \textbf{100.00} \\
    5                             & 298.14                                    & 297.92                              & 0.074                             & 295.96                              & 0.73                                  & 295.96                                    & 0.73      & 295.76        & 189.30 & 0.80      & 1.16          & \textbf{99.95}  \\
    6                             & 430.38                                    & 430.28                              & 0.023                             & 429.74                              & 0.15                                  & 429.74                                    & 0.15      & 429.66        & 278.50 & 0.17      & 1.20          & \textbf{99.95}  \\
    7                             & 586.30                                    & 586.30                              & 0.00                              & 586.06                              & 0.041                                 & 586.06                                    & 0.041     & 586.00        & 381.96 & 0.051     & 1.14          & \textbf{99.98}  \\
    8                             & 766.12                                    & 766.12                              & 0.00                              & 766.12                              & 0.00                                  & 766.12                                    & 0.00      & 766.12        & 503.12 & 0.00      & 1.80          & \textbf{99.90}  \\
    9                             & 969.90                                    & 969.86                              & 0.0041                            & 969.74                              & 0.017                                 & 969.74                                    & 0.017     & 969.74        & 637.69 & 0.017     & 1.70          & \textbf{99.93}  \\
    \bottomrule
  \end{tabular}
  \label{tab:compiler_performance}
\end{table*}

\begin{table*}[t]
  \caption{Comparison of $T$~gate optimization results across five compilers without MCR-swap.
    The same definitions for $t_{\rm unopt}$, $t_{\rm opt}$, and $n_h$ apply as in Table~\ref{tab:compiler_performance}.
  }
  \setlength{\tabcolsep}{5pt}
  \centering
  \begin{tabular}{llrrrrrrrrrrr}
    \toprule
    \multirow{2}{*}[-1.07em]{$n$} & \multirow{2}{*}[-1.07em]{$t_{\rm unopt}$} & \multicolumn{2}{c}{\textbf{Pytket}} & \multicolumn{2}{c}{\textbf{PyZX}} & \multicolumn{2}{c}{\textbf{TMerge}} & \multicolumn{3}{c}{\textbf{FastTODD}} & \multicolumn{2}{c}{\textbf{MCR Compiler}}                                                                                    \\
    \cmidrule(lr){3-4} \cmidrule(lr){5-6} \cmidrule(lr){7-8} \cmidrule(lr){9-11} \cmidrule(lr){12-13}
                                  &                                           & $t_{\rm opt}$                       & Red. (\%)                         & $t_{\rm opt}$                       & Red. (\%)                             & $t_{\rm opt}$                             & Red. (\%) & $t_{\rm opt}$ & $n_h$  & Red. (\%) & $t_{\rm opt}$ & Red. (\%)       \\
    \midrule                      
    2                             & 33.00                                     & 33.00                               & 0.00                              & 33.00                               & 0.00                                  & 33.00                                     & 0.00      & 33.00         & 18.96  & 0.00      & 1.00          & \textbf{100.00} \\
    3                             & 73.00                                     & 73.00                               & 0.00                              & 73.00                               & 0.00                                  & 73.00                                     & 0.00      & 73.00         & 47.36  & 0.00      & 1.00          & \textbf{100.00} \\
    4                             & 129.00                                    & 129.00                              & 0.00                              & 129.00                              & 0.00                                  & 129.00                                    & 0.00      & 129.00        & 86.96  & 0.00      & 1.00          & \textbf{100.00} \\
    5                             & 201.00                                    & 201.00                              & 0.00                              & 201.00                              & 0.00                                  & 201.00                                    & 0.00      & 201.00        & 138.51 & 0.00      & 1.00          & \textbf{100.00} \\
    6                             & 289.00                                    & 289.00                              & 0.00                              & 289.00                              & 0.00                                  & 289.00                                    & 0.00      & 289.00        & 202.30 & 0.00      & 1.00          & \textbf{100.00} \\
    7                             & 393.00                                    & 393.00                              & 0.00                              & 393.00                              & 0.00                                  & 393.00                                    & 0.00      & 393.00        & 277.37 & 0.00      & 1.00          & \textbf{100.00} \\
    8                             & 513.00                                    & 513.00                              & 0.00                              & 513.00                              & 0.00                                  & 513.00                                    & 0.00      & 513.00        & 363.76 & 0.00      & 1.00          & \textbf{100.00} \\
    9                             & 649.00                                    & 649.00                              & 0.00                              & 649.00                              & 0.00                                  & 649.00                                    & 0.00      & 649.00        & 461.86 & 0.00      & 1.00          & \textbf{100.00} \\
    \bottomrule
  \end{tabular}
  \label{tab:compiler_performance_without_swap}
\end{table*}

\begin{table*}[t]
  \centering
  \caption{$T$-count optimization performance for the kicked XY model. \textbf{Original} denotes the $T$-count of the original quantum circuit. \textbf{MCR Compiler} shows the $T$-count after optimization, in comparison with that obtained by \textbf{TMerge}. \textbf{Ideal} denotes the $T$-count obtained under the assumption that all single-qubit $Z$-rotation gates are synthesized into Clifford gates.}
  \label{tab:kicked_xy}

  \begin{minipage}[t]{0.48\linewidth}
    \centering
    (a) One-dimensional chain (Open boundary condition).

    \vspace{.5mm}
    \begin{tabular}{lrrrr}
      \toprule
      $n$ & \textbf{Original} & \textbf{TMerge} & \textbf{MCR Compiler} & \textbf{Ideal} \\
      \midrule
      6   & 96                & 96              & \textbf{60}           & 60             \\
      12  & 408               & 408             & \textbf{264}          & 264            \\
      18  & 936               & 936             & \textbf{612}          & 612            \\
      24  & 1680              & 1680            & \textbf{1104}         & 1104           \\
      \bottomrule
    \end{tabular}
  \end{minipage}
  \hfill
  \begin{minipage}[t]{0.48\linewidth}
    \centering
    (b) Two-dimensional square lattice.

    \vspace{.5mm}
    \begin{tabular}{lrrrr}
      \toprule
      $n$         & \textbf{Original} & \textbf{TMerge} & \textbf{MCR Compiler} & \textbf{Ideal} \\
      \midrule
      $2\times 2$ & 48                & 48              & \textbf{32}           & 32             \\
      $3\times 3$ & 330               & 330             & \textbf{262}          & 240            \\
      $4\times 4$ & 1024              & 1024            & \textbf{862}          & 768            \\
      $5\times 5$ & 2730              & 2730            & \textbf{2132}         & 2080           \\
      \bottomrule
    \end{tabular}
  \end{minipage}
\end{table*}

We create the benchmark circuits via quantum circuit unoptimization using the quantum circuit simulator Stim~\cite{stim}. We can entirely characterize the type of a multi-Pauli rotation gate by a Pauli string specifying the rotation axis and a sign indicating whether the rotation angle is $\pi/4$ or $-\pi/4$. Stim is well-suited for dealing with such circuits, as it allows efficient updates of Pauli strings under Clifford conjugation while preserving small representations.
Since the unoptimized circuit consists solely of multi-Pauli rotation gates with angles of $\pm \pi/4$, they can be exactly decomposed into Clifford$+T$ gates.
To save benchmark circuit data, we use the Qasm~\cite{qasm} and qc file formats~\cite{reversible_benchmark}.
To ensure that the unoptimization process preserves circuit equivalence as defined in Eq.~\eqref{eq:equivalence}, we verify correctness using the MQT-qcec tool provided by The Munich Quantum Toolkit~\cite{mqt}. All source code used in this work is available on GitHub~\cite{my_github}.

Table~\ref{tab:compiler_performance} summarizes the averaged results.
The table includes the $T$-count before optimization, followed by the output from each compiler: the reduced $T$-count and the corresponding reduction percentage.
For FastTODD, an additional column shows the number of ancilla qubits $n_h$ introduced via Hadamard gadgetization.
As expected, we verify that the $T$-counts $t_{\rm unopt}$ and $t_{\rm opt}$ increase as the number of qubits $n$ increases.

We first assess the performance of the MCR Compiler proposed in Sec.~\ref{subsec: mcr_aware_compiler} on this benchmark dataset. The MCR Compiler consistently achieves $T$-count reduction rates close to $100\%$ across all values of $n$, effectively restoring the $T$-count to that of the original circuit.
This demonstrates that MCR-based transformations can serve as an effective optimization primitive for $T$-count reduction.

In contrast, none of the four existing compilers produces reductions comparable to those obtained by the MCR Compiler at any circuit size.
Among them, FastTODD consistently achieves the largest $T$-count reductions. For instance, at $n=2$, it reduces the $T$-count by over 30\%. However, its effectiveness rapidly drops as $n$ increases, reaching well below 1\% for $n \geq 6$. Moreover, FastTODD introduces a large number of ancilla qubits $n_h$ during Hadamard gadgetization. For example, at $n=9$, it requires over 630 ancilla qubits, increasing the total number of qubits by more than 70 times. Compilation becomes computationally expensive at this scale, with runtimes exceeding 21 h per circuit on average, despite yielding only marginal improvements in $T$-count.
In comparison, PyZX and TMerge provide modest reductions for small circuits, but their performance similarly degrades as $n$ grows, with both reaching reductions below the 1\% level at $n = 5$. Pytket, on the other hand, consistently shows a relatively lower reduction rate compared to other compilers.

To explore why optimization partially works for small circuits but not for larger ones, we conduct an additional numerical experiment using a benchmark dataset generated by skipping step (c) in Fig.~\ref{fig:unopt_workflow}, i.e., omitting gate swapping based on MCR (see Appendix~\ref{sec: mcr_unopt} for details).
We report the results in Table~\ref{tab:compiler_performance_without_swap}, revealing a complete absence of $T$-count reduction across four existing compilers and circuit sizes. Even FastTODD, despite the introduction of ancilla qubits, fails to achieve any improvement under this condition.
This result suggests that for circuits with a small number of qubits, MCR-based gate swapping accidentally matches the rotation axes of neighboring gates, making optimization easier. Since the total number of gate sequences satisfying MCR increases exponentially with the number of qubits $n$ (see Appendix~\ref{sec: mcr_scaling} for details), cases where the rotation axes coincide after swapping become less likely as $n$ increases.

\subsection{Application for practical quantum circuits}
\label{subsec: mcr_application}

As briefly introduced in Sec.~\ref{subsec: definition_mcr}, we consider quantum circuits arising from the simulation of kicked spin dynamics of the XY model, whose Hamiltonian is given in Eq.~\eqref{eq: xy_model}, to assess the practical relevance of MCR-aware optimization.
In this work, all rotation angles are fixed to $\pi/4$, so that each rotation gate corresponds to one $T$~gate. Moreover, the number of layers is always chosen to be even: for a given number of qubits $n$, we use $n$ layers when $n$ is even and  $n+1$ layers otherwise.
Under this setting, MCR-based circuit rewrites allow all $Z$-rotations to be synthesized into Clifford operations, leaving only rotations around the $XX$ and $YY$ axes to contribute to the final $T$-count.

We benchmark these circuits on a one-dimensional chain with open boundary conditions (ranging from $n=~6$ to $24$), and on a two-dimensional square lattice (ranging from $n=4$ to $25$). The optimization results are summarized in Table~\ref{tab:kicked_xy}, where the column Ideal corresponds to the $T$-count obtained when reducible $Z$-rotation gates are successfully merged.
In the one-dimensional case, the MCR Compiler achieves the same $T$-count as the Ideal value, whereas no reduction is obtained using conventional compilers.
For two-dimensional lattices, the MCR Compiler still yields substantial reductions compared to existing methods.
These results indicate that MCR-based optimization is effective not only for deliberately unoptimized benchmark circuits but also for practically relevant quantum circuits derived from realistic quantum algorithms.

\section{Conclusion and outlook}\label{sec: conclusion}

In this paper, we propose and formalize the multiproduct commutation relation (MCR) as a nonlocal and ancilla-free circuit transformation rule for Clifford$+T$ circuits.
MCR enables equivalent rewrites involving multiple nonadjacent multi-Pauli rotation gates, capturing nontrivial commutativity that pairwise commutation rules cannot reach.
Given the rotation axes of three multi-Pauli rotation gates, Theorem~\ref{thm:abc_multi_commutation} provides a concrete condition to determine the fourth gate that satisfies MCR.
Building on this characterization, we develop the MCR Compiler, which combines MCR-based circuit rewrites with local rotation merging to achieve $T$-count reduction.
To assess the optimization capability enabled by MCR, we adopt 
a procedure called quantum circuit unoptimization, which introduces redundancy into Clifford$+T$ circuits by inserting and swapping quantum gates while preserving circuit equivalence. This approach systematically generates complex circuits from simple ones with a guaranteed optimal $T$-count, offering a quantitative framework for evaluating how well a compiler eliminates redundant $T$~gates.

Through numerical experiments within this framework, we demonstrate that the MCR Compiler achieves substantial $T$-count elimination, whereas state-of-the-art $T$-count optimization compilers yield marginal reductions.
We also compare the optimization performance results with and without MCR-based gate swapping. The comparison shows that while MCR-based gate swapping introduces additional circuit redundancy, it can occasionally trigger effective local optimization in small regions of the circuit.
However, such cases become infrequent as the number of qubits grows, due to the exponential increase in the number of valid MCR candidates.

MCR involves circuit transformations that can increase or decrease the number of $T$ layers, thereby suggesting a broader class of rewrite rules beyond the reach of existing compiler strategies. This insight opens opportunities for compiler design, circuit synthesis, and algorithm-based optimization.
However, global exploration of equivalent circuit representations becomes computationally infeasible, since the search space induced by MCR-based rewrites grows exponentially with the system size. The MCR Compiler proposed in this work maintains scalability by restricting the search to a subspace defined by the partition of the circuit into $T$~layers. MCR candidates are examined only across neighboring layers, allowing the optimization procedure to remain tractable even for large circuits.
A natural direction for future work is therefore to extend the exploration beyond this restricted subspace while preserving scalability. Promising approaches include Monte Carlo-style strategies that probabilistically insert identity circuits to expose additional MCR opportunities, as well as divide-and-conquer methods that exploit structural patterns of the target circuits.
Once developed, the performance of such a compiler can be evaluated on representative datasets, such as the \textit{Reversible Benchmarks}~\cite{reversible_benchmark}.

Another promising direction involves generalizing MCR to Clifford$+R_Z$ circuits with continuous rotation angles.
This circuit class is widely used in quantum algorithms, including variational quantum algorithms~\cite{vqe, qaoa} and Trotterized quantum simulation tasks~\cite{quantum_utility}.
In architectures that support transversal Clifford gates and implement $Z$-rotations using repeat-until-success protocols, minimizing the number of such gates can significantly improve execution efficiency~\cite{early_ftqc, toshio_early_ftqc}. Consequently, many quantum algorithms can benefit from an MCR-based optimization tool designed to reduce such rotations.
The authors in Ref.~\cite{optimal_parametric_qc} propose a method to optimize the number of parameters in nonrepeated parameterized quantum circuits. However, this excludes circuits where parameters exhibit algebraic relationships, such as the simultaneous appearance of $\alpha_1$ and $-\alpha_1 + \pi/4$. In contrast, MCR admits simplifications even in these correlated cases, offering opportunities to optimize beyond the reach of current techniques. This property may provide value in structured algorithms, such as QAOA-MC~\cite{qaoa_mc}, where repeated parameters are present.

For circuits using discrete angles, also known as Clifford$+R_Z\qty(\pi/2^k)$ circuits, which relate closely to the Clifford hierarchy~\cite{clifford_hierarchy_gottesman, clifford_hierarchy_anderson}, MCR could offer similar benefits. Prior works~\cite{amy_reed_muller, fasttodd} explore optimization in this setting using Reed-Muller code techniques and prove that this problem can be reduced to the higher-order symmetric tensor rank decomposition problem. The complexity of the symmetric tensor rank decomposition problem remains unknown. However, the related problem of determining the tensor rank of a third-order tensor, without the symmetry constraint, is NP-complete~\cite{tensor_rank}.
We expect MCR-based strategies to improve the compilation performance even in such a circuit structure.
A practical application in this circuit setting is time evolution by probabilistic angle interpolation (TE-PAI)~\cite{tepai}, which simulates quantum dynamics using ensembles of random circuits composed only of multi-Pauli rotations with angles $\pm\Delta$ or $\pm\pi$. Since these circuits belong to the Clifford+$R_Z$ (discrete angle) class, a compiler that reduces redundant rotations could help efficiently implement TE-PAI on a real quantum device.

\vskip\baselineskip

\section*{Acknowledgement}
This work is supported by the MEXT Quantum Leap Flagship Program (MEXT Q-LEAP) Grant No.~JPMXS0120319794, JST COI-NEXT Grant No.~JPMJPF2014, JST Moonshot R\&D Grant No.~JPMJMS2061, and JSPS KAKENHI Grant Nos.~JP25KJ1712 and JP24K16979.

\appendix

\section{Proof of Theorem~\ref{thm:abc_multi_commutation}}
\label{sec: proof_of_mcr}

We prove Theorem~\ref{thm:abc_multi_commutation} as described in Sec.~\ref{subsec: mcr_construction_condition}.

\noindent
($\Leftarrow$) This can be demonstrated by substituting $D=-ABC$ into (i)--(iii) of (I) in order.

\noindent
($\Rightarrow$):
Combining (ii) and (iii) of (I) and the given assumptions, we obtain
\begin{align}
   & \quad \commutator{A+B}{C+D} = 0 \\
  \label{eq:proof_multi_commutation}
   & \Rightarrow AC+AD+BC+BD = 0
\end{align}
Since $A, B, C$, and $D$ are mutually distinct Pauli operators, each term in Eq.~\eqref{eq:proof_multi_commutation} is a nonidentity Pauli product. Because the set of $n$-qubit Pauli operators forms an orthogonal basis under the Hilbert–Schmidt inner product, their sum can be zero only if terms cancel in pairs.

\vspace{1em}
\noindent
\underline{Case 1}\quad $AC + BC = 0$

\noindent
Eq.~\eqref{eq:proof_multi_commutation} becomes
\begin{equation}
  \begin{split}
     & \quad AD + BD = 0   \\
     & \Rightarrow A = -B,
  \end{split}
\end{equation}
which implies that $A$ and $B$ have the same rotation axis but opposite rotation angles, contradicting the assumption that they have distinct axes.

\vspace{1em}
\noindent
\underline{Case 2}\quad $AC + BC \neq 0$

\noindent
We must partition the sum $-(AC + BC)$ into two Pauli operators $AD$ and $BD$ such that their sum equals this value. Because a sum of two distinct Pauli operators cannot be another single Pauli operator, the only possibility is that
$\qty(AD, BD) = (-AC, -BC)$ or $\qty(AD, BD) = (-BC, -AC)$.
We analyze these two cases separately.
\begin{itemize}
  \item If $\qty(AD, BD) = (-AC, -BC)$, then $D = -C$, implying that $D$ and $C$ share the same rotation axis, which again contradicts the assumption.
  \item If $\qty(AD, BD) = (-BC, -AC)$, then $D = -ABC (= -BAC)$, which has a rotation axis distinct from all of $A, B$, and $C$, and hence satisfies the assumption.
\end{itemize}
Therefore, the only valid solution is $D = -ABC$.

\section{Analysis of the number of possible MCR pairs}
\label{sec: mcr_scaling}
Here, we explain the total number of combinations of gate rotation axes that satisfy MCR.
MCR is a commutation rule applied to four sequential gates: two on the left and two on the right.
Let the set of rotation axes be $\mP_n^* \coloneqq \qty{ \pm 1 } \times \qty{I, X, Y, Z}^{\otimes n} \setminus \qty{\pm I^{\otimes n}}$ (Eq.~\eqref{eq:rotation_axis_set}),
and let the rotation axes of the gates from the left be $A, B, C, D \in \mP_n^*$, where each gate has a different rotation axis except for the sign $\pm 1$.
According to Algorithm~\ref{alg: gen_unopt} in Sec.~\ref{subsec: unopt_recipe}, the method for choosing rotation axes that satisfy MCR is as follows.

\begin{enumerate}
  \item Select one element from $\mP_n^*$ and set it as $A$.
  \item Choose $B \in \mP_n^*$ such that $B$ commutes with $A$.
  \item Choose $C \in \mP_n^*$ such that $\anticommutator{A}{C} = \anticommutator{B}{C} =0$.
  \item Obtain $D$ by setting $D=-ABC$.
\end{enumerate}
Let us determine the number of ways to select the rotation axes according to this procedure.
First, the axis $A$ can be chosen from any nonidentity Pauli operator, giving $4^n-1$ possible choices. The second axis $B$ must commute with $A$ and must not be equal to $A$ or the identity. In general, for any given Pauli operator, half of the other operators commute with it. Therefore, the number of valid choices for $B$ is
$4^n /2 - 2$, where the subtraction accounts for excluding the identity operator and the previously chosen $A$.
Next, $C$ must anticommute with both $A$ and $B$. Since one-fourth of all Pauli operators anticommute with both, the number of valid choices for $C$ is $4^{n-1}$.
$D$ is determined uniquely as $D=-ABC$ to satisfy MCR once $A$, $B$, and $C$ are fixed as described in Theorem~\ref{thm:abc_multi_commutation}.
Additionally, since the signs of $A$, $B$, and $C$ can be chosen independently, we multiply by 8 to account for the possible sign combinations. However, to avoid overcounting due to permutation of the pairs $\qty(A, B)$ and $\qty(C, D)$, and the exchange between these two pairs, we divide the total count by $(2!)^3=8$. Putting everything together, the total number of such combinations is
\begin{equation}
  \frac{1}{8} \cdot 4^n \qty(4^n-1) \qty(4^n-4),
  \label{eq:mcr_candidate}
\end{equation}
which shows that the number of possible axis configurations satisfying MCR scales exponentially with $n$.

\section{Construction of MCR-based unoptimization benchmark circuits}
\label{sec: mcr_unopt}
\subsection{MCR-based recipes for unoptimization}
\label{subsec: unopt_recipe}

\begin{figure}[t]
  \centering
  \includegraphics[width=.9\linewidth]{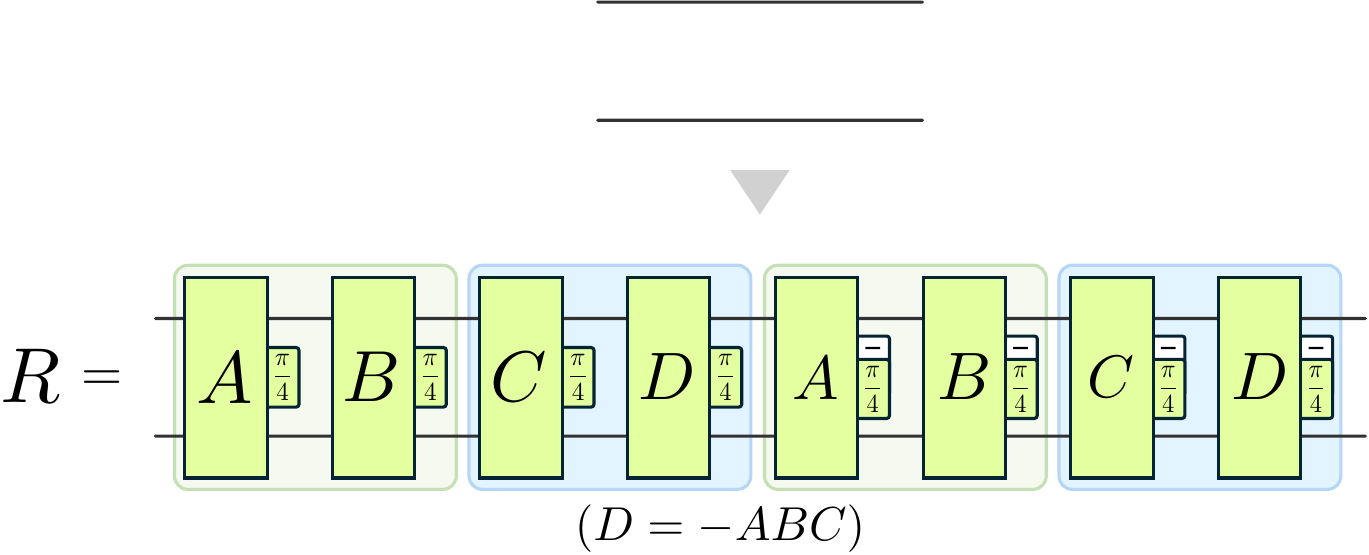}
  \caption{Construction of a nontrivial identity operator based on the multiproduct commutation relation property, used in the insertion step of the unoptimization procedure. Each alphabet represents a rotation axis of a multi-Pauli rotation.}
  \label{fig:nontrivial_identity_circuit}
\end{figure}

\begin{figure}[t]
  \begin{center}
    \includegraphics[width=\linewidth]{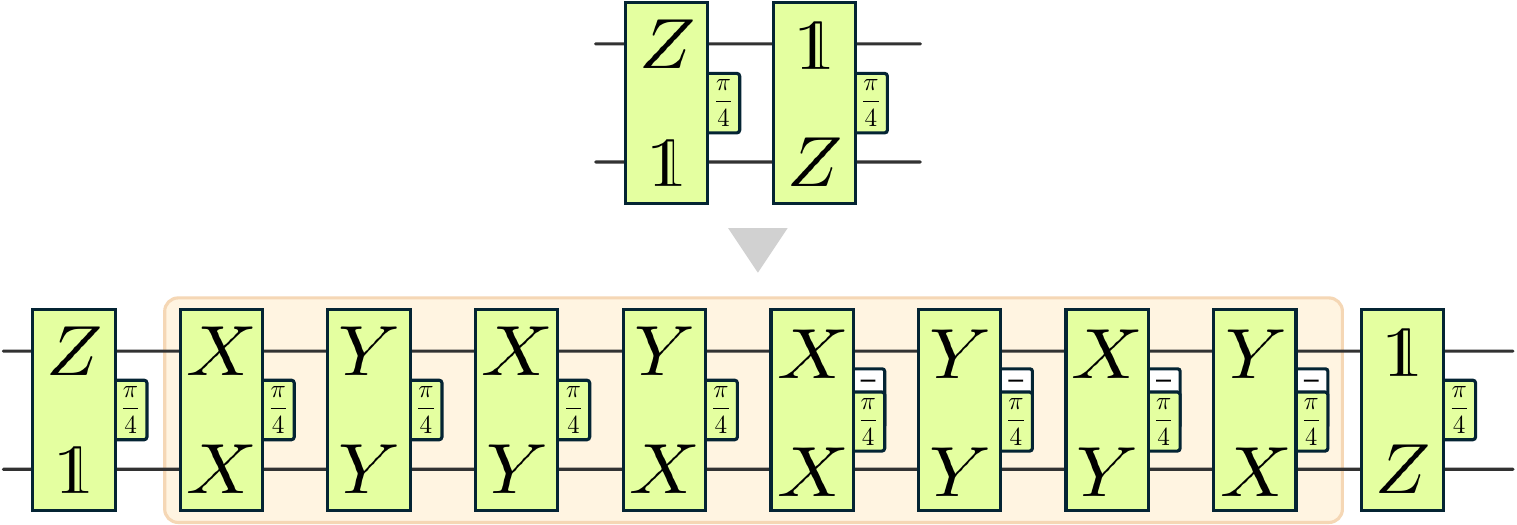}
    \caption{A gate insertion example based on Fig~\ref{fig:nontrivial_identity_circuit}. Although the inserted circuit is equivalent to the identity, it cannot be simplified by considering only pairwise commutativity of adjacent gates.}
    \label{fig:rot_insertion_nontrivial}
  \end{center}
\end{figure}

\begin{figure}[ht]
  \begin{algorithm}[H]
    \caption{MCR-based quantum circuit unoptimization}
    \begin{algorithmic}[1]
      \State \textbf{Generation of identity operator using MCR}
      \State Select $A \in \mP_n^*$
      \State Choose $B \in \mP_n^*$ such that $[A, B] = 0$
      \State Choose $C \in \mP_n^*$ such that $\{A, C\} = 0$ and $\{B, C\} = 0$
      \State Compute $D \gets -A B C$
      \State Construct gate sequence:\\
      \quad $R \gets R_{-D}(\pi/4) R_{-C}(\pi/4) R_{-B}(\pi/4) R_{-A}(\pi/4)$\\
      \quad $R_{D}(\pi/4) R_{C}(\pi/4) R_{B}(\pi/4) R_{A}(\pi/4)$
      \State \textbf{Output:} $R$ Gate sequence such that $R  = I$

      \State
      \State \textbf{Generation of unoptimized circuit}
      \For{$i \gets 0$ to $n^2-1$}
      \State $C[i] \gets$ $R$, \verb|Gate Swapping|
      \EndFor
      \State $V \gets \mO(U, C)$
    \end{algorithmic}
    \label{alg: gen_unopt}
  \end{algorithm}
\end{figure}

\begin{figure}[t]
  \begin{center}
    \includegraphics[width=.7\linewidth]{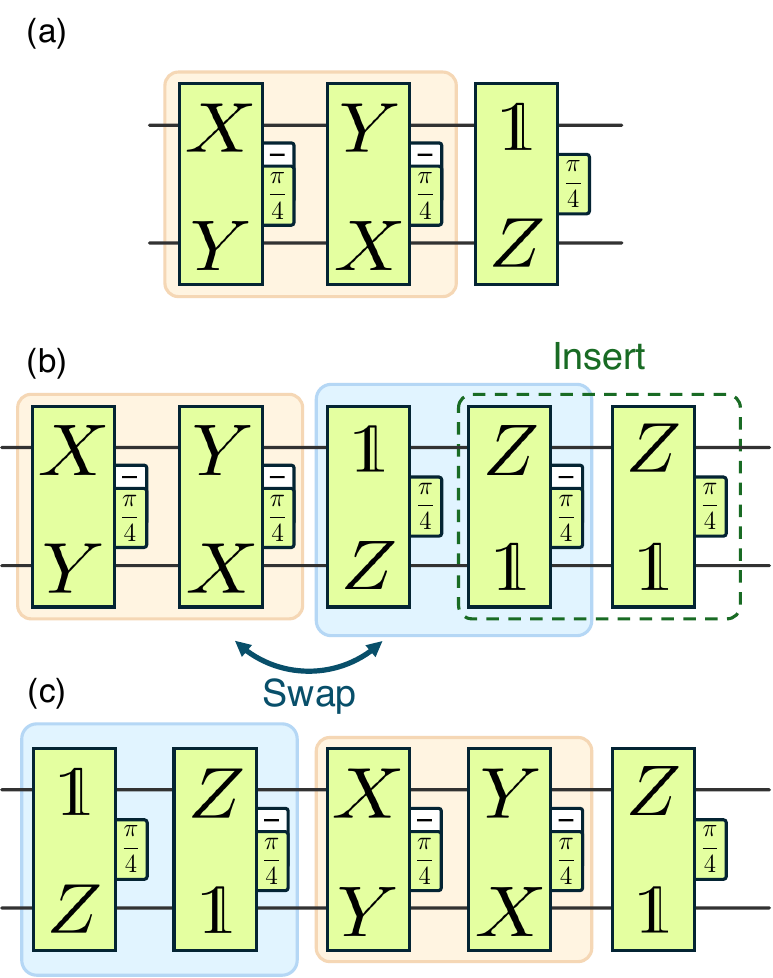}
    \caption{Gate swapping unoptimization based on multiproduct commutation relation. \\
      (a) The circuit segment after the gate insertion.
      (b) Additional gate insertion of identity gates (green dashed box) to satisfy multiproduct commutation relation.
      (c) The resulting circuit after swapping the commuting blocks.}
    \label{fig:rot_gate_swap_nontrivial}
  \end{center}
\end{figure}

First, we explain \textit{gate insertion}, which refers to the operation of inserting a gate $A$ together with its inverse $A^{\dagger}$ placed adjacently in the circuit, yielding the identity operator.
A typical example is the insertion of $R_P\qty(\theta)$ followed by $R_P\qty(-\theta)$.
In this paper, we generalize this concept by employing MCR to construct more complex identity sequences, denoted $R$, as depicted in Fig.~\ref{fig:nontrivial_identity_circuit}.
Such sequences are systematically generated by selecting rotation axes satisfying the MCR conditions and arranging their gates and inverses, as formalized in Theorem~\ref{thm:abc_multi_commutation}.
An example is shown in Fig.~\ref{fig:rot_insertion_nontrivial}, and the generation procedure is given in Algorithm~\ref{alg: gen_unopt}.
\begin{figure*}[t]
  \centering
  \includegraphics[width=1.0\linewidth]{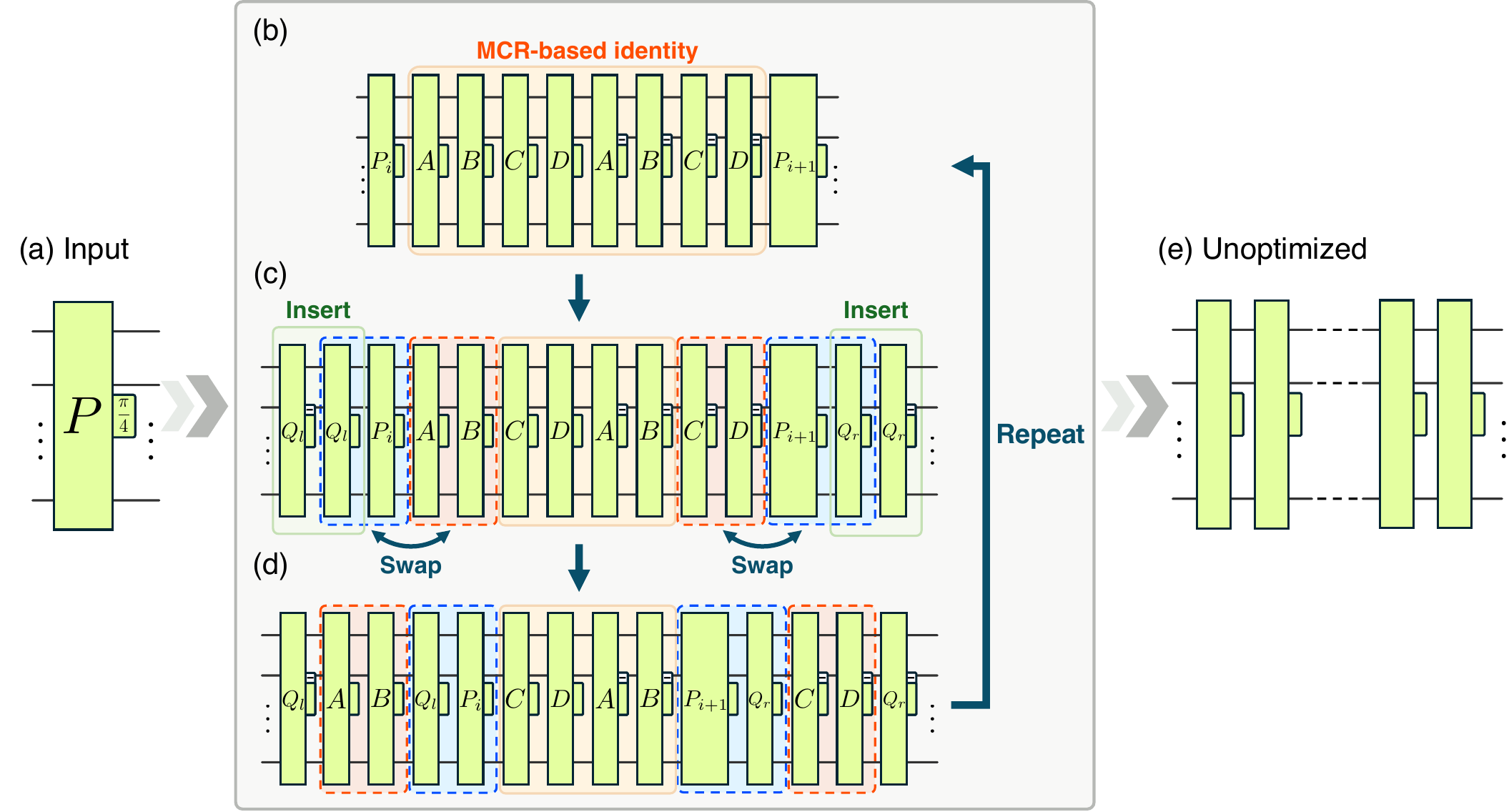}
  \caption{Overview of multiproduct commutation relation-based quantum circuit unoptimization.
    (a) The input circuit, consisting of a single multi-Pauli rotation gate.
    (b) Insertion of a multiproduct commutation relation-based identity sequence (as constructed in Fig.~\ref{fig:nontrivial_identity_circuit}).
    (c) Insertion of additional gates $R_{Q_i}^{\dagger}$ and $R_{Q_i}$ at both ends, corresponding to the procedure described in Fig.~\ref{fig:rot_gate_swap_nontrivial}(b).
    (d) The circuit after swapping the outermost gates, thereby completing the procedure in Fig.~\ref{fig:rot_gate_swap_nontrivial}.
    (e) The unoptimized circuit obtained by repeating steps (b) -- (d) $n^2$ times.}
  \label{fig:unopt_workflow}
\end{figure*}

\begin{figure}[h]
  \centering
  \includegraphics[width=.8\linewidth]{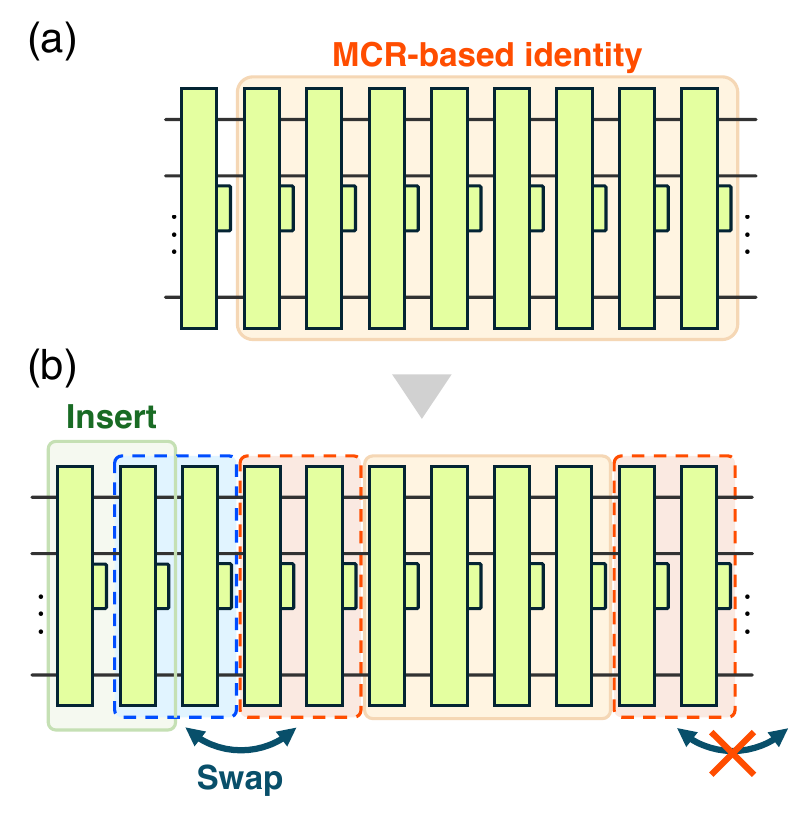}
  \caption{Special case of gate swapping in multiproduct commutation relation-based unoptimization when the target gate is located at the rightmost end of the circuit. (a) Circuit after inserting a multiproduct commutation relation-based identity operator. (b) Swap operation applied only to the left end, as the right end cannot be operated in this configuration.}
  \label{fig:edge_selection}
\end{figure}

Next, we explain the operation of \textit{gate swapping}, which refers to exchanging the positions of two adjacent gate blocks in a circuit.
Here, we combine the gate swapping with an additional gate insertion that induces commutativity through MCR. This procedure is illustrated in Fig.~\ref{fig:rot_gate_swap_nontrivial}.
Consider the swap of the rightmost two gates in the 8-gate identity sequence constructed via MCR (as shown in Fig.~\ref{fig:rot_insertion_nontrivial}) with the gate immediately to their right.
Since MCR can only be applied to pairs of gates, it is necessary to add one more gate to complete the MCR condition. This supplementary gate can be computed using Theorem~\ref{thm:abc_multi_commutation}; in this case, $-ABC =-Z I$.
Therefore, by adding $R_{ZI}\qty(-\pi/4)$, a gate sequence satisfying MCR can be constructed.
Thus, as shown in Fig.~\ref{fig:rot_gate_swap_nontrivial}(b), if we insert $R_{ZI}\qty(-\pi/4)$ and $R_{ZI}\qty(\pi/4)$ to the right of $R_{YX}\qty(-\pi/4)$, we can swap the two pairs of gates $(R_{XY}\qty(-\pi/4), R_{YX}\qty(-\pi/4))$ and $(R_{IZ}\qty(\pi/4), R_{ZI}\qty(-\pi/4))$ since they satisfy MCR. The resulting reordered sequence is shown in Fig.~\ref{fig:rot_gate_swap_nontrivial}(c).

\subsection{Procedure for benchmark circuit construction}
\label{subsec: benchmark_generation}
We explain the unoptimization algorithm that combines the gate insertion and gate swapping techniques described above, along with the compiler benchmarking methodology.
The overall workflow is illustrated in Fig.~\ref{fig:unopt_workflow}.
We begin with an $n$-qubit circuit $U$ consisting of a single multi-Pauli rotation gate $R_P\qty(\pi/4)$, which guarantees an optimal $T$-count of 1 and thus cannot be further reduced. Then, from the circuit, we randomly select one multi-Pauli rotation gate with axis $P_i$ and insert an identity operator based on Algorithm~\ref{alg: gen_unopt} immediately to its right. We make the identity sequence such that the following conditions are satisfied:
\begin{equation}
  \label{eq:p_i_anticommutation}
  \anticommutator{P_i}{A} = \anticommutator{P_i}{B} = \anticommutator{C}{P_{i+1}}= \anticommutator{D}{P_{i+1}} = 0.
\end{equation}
Next, we perform gate swapping on the two gates at both ends of the inserted gate sequence. Here, we first explain the operation of the gates in the left half. In Fig.~\ref{fig:unopt_workflow}(c), we insert an additional gates $R_{Q_l}\qty(-\pi/4)$ and $R_{Q_l}\qty(\pi/4)$ (highlighted in green), where $Q_l = -P_i AB$. Since the gate pairs $(R_{Q_l}, R_{P_i})$ (highlighted in blue) and $(R_A, R_B)$ (highlighted in red) satisfy MCR, we can swap these pairs. This operation results in the updated gate sequence depicted in Fig.~\ref{fig:unopt_workflow}(d). We perform a similar operation on the gate sequence in the right half. In this case, $Q_r = -CD P_{i+1}$, which also satisfies MCR with the adjacent pairs $(R_{-C}, R_{-D})$ and $(R_{P_{i+1}},R_{Q_r})$.
Note that when we select the gate with axis $P_i$ as the right edge of the circuit, we swap the gates on only the left side as shown in Fig.~\ref{fig:edge_selection}. In this case, we do not add the additional gates $R_{Q_r}\qty(\pi/4), R_{Q_r}\qty(-\pi/4)$, and only the left-side gates are swapped with the inserted gate sequence.

We repeat these procedures $n^2$ times, where $n$ is the number of qubits in the circuit. The final circuit obtained is the unoptimized circuit $V$, as shown in Fig.~\ref{fig:unopt_workflow}(e).
The gate insertion and swapping methods enable the systematic transformation of a simple input circuit into a structurally distinct one by changing the type, position, and order of the inserted gates.
In our implementation, identity sequences are randomly generated using Algorithm~\ref{alg: gen_unopt}. If the resulting sequence does not satisfy the anticommutation conditions required for MCR-based gate swapping (Eq.~\eqref{eq:p_i_anticommutation}), the generation step is repeated until a valid candidate is found.

\bibliography{ref.bib}

\end{document}